\DeclareMathOperator{\curl}{\mathrm{curl}}
\DeclareMathOperator{\notni}{\reflectbox{$\notin$}}
\DeclareMathAlphabet{\mathpzc}{OT1}{pzc}{m}{it}
\newcommand\1{{\ensuremath {\mathds 1} }}
\newcommand{\C}{\mathbb{C}}
\newcommand{\Q}{\mathbb{Q}}
\newcommand{\R}{\mathbb{R}}
\newcommand{\Z}{\mathbb{Z}}
\newcommand{\bA}{\mathbf{A}}
\newcommand{\be}{\mathbf{e}}
\newcommand{\bJ}{\mathbf{J}}
\newcommand{\br}{\mathbf{r}}
\newcommand{\bx}{\mathbf{x}}
\newcommand{\by}{\mathbf{y}}
\newcommand{\sx}{\mathrm{x}}
\newcommand{\sz}{\mathrm{z}}
\newcommand{\cD}{\mathscr{D}} 
\newcommand{\cE}{\mathcal{E}}
\newcommand{\cG}{\mathcal{G}}
\newcommand{\cS}{\mathcal{S}}
\newcommand{\cV}{\mathcal{V}}
\newcommand{\sym}{\mathrm{sym}}
\newcommand{\asym}{\mathrm{asym}}
\newcommand{\loc}{\mathrm{loc}}
\newcommand{\nn}{\mathrm{nn}}
\newcommand{\domD}[2]{\cD^{#1}_{#2}}
\newcommand{\ordnu}{a}
\newcommand{\bDelta}{{\mbox{$\triangle$}\hspace{-8.0pt}\scalebox{0.8}{$\triangle$}}}
\theoremstyle{plain}
\newtheorem{thm}{Theorem}
\newtheorem{prop}[thm]{Proposition}
\theoremstyle{definition}
\theoremstyle{remark}
\newtheorem*{ack}{Acknowledgments}
\title{Many-anyon trial states}
\author{Douglas Lundholm}
\date{\scriptsize{Department of Mathematics, KTH Royal Institute of Technology\\ SE-100 44 Stockholm, Sweden}}
\begin{document}

\maketitle

\begin{abstract}
	The problem of bounding the (abelian) many-anyon ground-state energy from above,
	with a dependence on the statistics parameter which matches that
	of currently available lower bounds, is reduced to studying the correlation
	functions of Moore--Read (Pfaffian) and Read--Rezayi type clustering states.
\end{abstract}


\setlength\arraycolsep{2pt}
\def\arraystretch{1.2}

\section{Introduction} \label{sec:intro}

The importance of the concept of quantum statistics for our 
understanding of observed collective phenomena in nature cannot be overstated. 
While fermions stack according to 
the Pauli principle to form a Fermi sea, with its
implications for atomic structure, conduction bands in solids, etc.,
bosons can sit together to display amplified single-particle behavior,
as manifested by 
Bose--Einstein condensation and the coherent propagation of light.
Upon restricting to 
two spatial dimensions on the other hand, it turns out to be 
logically conceivable to have other types 
of identical particles than bosons and fermions,
satisfying braid statistics instead of permutation
statistics, and which have been given the name 
`anyons' 
\cite{LeiMyr-77, GolMenSha-81, Wilczek-82a, Wilczek-82b, Wu-84} (see also \cite[p.~386]{Souriau-70}).
These have the property that 
under \emph{continuous} exchange of two particles
their wave function changes not merely by a sign $\pm 1$,
but \emph{any} complex phase factor $e^{i\alpha\pi}$ is allowed, 
where the real number $\alpha$ is known as their `statistics parameter'.
Moreover, for logical consistency one has to keep track of any topological
winding of the particles during their exchange. 
For instance, if one particle moves around in configuration space in such a way
as to enclose $p$ other particles in a complete (counterclockwise, say) loop, 
a phase $2p\alpha\pi$ must arise,
while if two particles are exchanged once and 
in the process $p$ other particles are enclosed, 
the phase must be $(1+2p)\alpha\pi$.
All such topological complications vanish 
in the case of $\alpha=0$ (bosons) and $\alpha=1$ (fermions).
The concept has also been extended from phases (abelian) to unitary matrices
(non-abelian), 
but we shall here stick to the simpler 
(though not at all simple) abelian case.
Furthermore, instead of demanding that the wave function changes its phase
according to the above form of topological boundary conditions, 
also known as the `anyon gauge picture',
one may equivalently model such phases by means of attaching magnetic flux
to ordinary identical particles, i.e. bosons or fermions, 
resulting in a magnetic many-body interaction.
This is then called the `magnetic gauge picture' for anyons.
We refer to the extensive reviews 
\cite{DatMurVat-03, Forte-92, Froehlich-90, IenLec-92, Khare-05, Lerda-92, Myrheim-99, Ouvry-07, Stern-08, Wilczek-90}
for a more complete background on the topic.

The idea of particles with attached magnetic flux is fundamental to
the fractional quantum Hall effect (FQHE) 
\cite{Girvin-04, Goerbig-09, Jain-07, Laughlin-99, StoTsuGos-99}, 
which is a manifestation of a 
strongly correlated many-body state of electrons 
subject to planar confinement and a strong transverse magnetic field. 
More recently it has been proposed that similar effects 
apply to trapped bosonic atoms in artificial magnetic fields 
\cite{BloDalZwe-08, Cooper-08, MorFed-07, RonRizDal-11, Viefers-08},
and to graphene \cite{Bolotin-etal-09,Du-etal-09}.
The quasiparticles arising in the FQHE of electrons are predicted to be 
anyons with a corresponding fractional statistics parameter.
However, as discussed in \cite[Section~9.8.2]{Jain-07}, 
there has been some 
confusion in the literature concerning the exact values of $\alpha$ involved.
This can be traced to different conventions but also to the fact that
the statistics parameter has hitherto only been defined indirectly via the 
operation of adiabatic braiding and the computation of a corresponding Berry phase, 
as first outlined in \cite{AroSchWil-84}.
Only very recently has an effective Hamiltonian for anyons 
been derived \cite{LunRou-16},
which shows unambiguously how they may arise in a FQHE context and what
the statistics parameter then should be, confirming that fermions against the 
background of a fermionic Laughlin state with odd exponent $2p+1$ effectively 
couple to Laughlin quasiholes to 
behave as emergent anyons with $\alpha = 2p/(2p+1)$ 
(see also \cite{Rougerie-16}).

Despite the concept of anyons having been around now for almost four decades,
a satisfactory understanding of their physics is still lacking.
Due to their complicated many-body interaction (or geometry)
it has not been possible to 
solve the anyon Hamiltonian for its complete spectrum
or even its ground state, 
except in the two-particle case where it can be reduced to a one-particle problem
and thus be solved analytically \cite{LeiMyr-77, Wilczek-82b, AroSchWilZee-85},
while 
in the three- and four-particle cases it has been studied numerically 
\cite{SpoVerZah-91, MurLawBraBha-91, SpoVerZah-92, SpoVerZah-93}.
Nevertheless, as is evident from the vast body of literature 
(the author can count more than 300 papers on the topic),
there has been a fair amount of progress on the many-anyon problem,
most of which is based on various approximations. 
One of the most discussed is average-field theory 
(see e.g. \cite{ChenWilWitHal-89,Hosotani-93,IenLec-92,WenZee-90,Wilczek-90} for review),
where the individual anyons are replaced by their average magnetic field, 
something which is arguably reasonable in a sufficiently dense regime.
Other approximations assume either a very strong external magnetic field, 
thereby reducing to lowest-Landau-level anyons which turn out to be solvable 
with 
a connection to Calogero--Sutherland models \cite{DasOuv-94,Ouvry-07},
or in the case of the free dilute gas,
that it is sufficient
to only take two-particle interactions 
into account \cite{AroSchWilZee-85, Minor-93}.
It has however been stressed 
that real progress 
in understanding the anyon gas cannot be 
made without knowledge of the true many-body spectrum. 

In a recent series of works 
\cite{LunSol-13a,LunSol-13b,LunSol-14,LunRou-15,LarLun-16,CorLunRou-16}, 
the question concerning the many-anyon
ground state has been investigated in the light of modern mathematical methods. 
Interestingly, it was found that the ground-state energy for the free ideal anyon gas 
can be non-trivially bounded from below, but only under the assumption that $\alpha$
is an odd-\emph{numerator} rational number 
(in contrast to electron FQHE which typically involves odd-\emph{denominator}
filling factors).
To settle the issue whether this is the true behavior or rather just an artifact
of the method used to obtain the bounds, one also needs to bound 
the energy from above using suitable trial states. 
This however turns out to be a very difficult problem for anyons,
contrary to the more common situation where finding 
the upper bound is the easier part.

We shall here proceed in the setting of
abelian anyons with no 
external magnetic field
(which is indeed relevant in the FQHE context; cf. e.g. \cite{Stern-08}),
and propose,
building on \cite{LunSol-13b}, that good variational ground
states for the many-anyon problem are given by clustering states 
of the Moore--Read and Read--Rezayi type 
that have already been studied for some time 
in the context of special (proposedly non-abelian) regimes of the FQHE. 
In particular, these types of states seem to give a much lower 
energy for even-numerator $\alpha$ than for odd numerators,
and offer a corresponding picture of condensation, 
respectively, a reduced Fermi sea of anyons.
In view of the above considerations,
such a picture could in the context of the FQHE potentially have
far-reaching consequences.

\section{The many-anyon ground-state energy} \label{sec:energy}

For concreteness and for easier comparison with the results which are available 
in the literature, we consider as our starting point anyons confined in a 
harmonic oscillator potential.
In the magnetic gauge,
the Hamiltonian operator for $N$ non-relativistic ideal\footnote{That is,
without other interactions than the statistical one, and purely
pointlike as opposed to extended; cf. Section~\ref{sec:extended} below.} 
anyons with mass $m$ 
in a harmonic trap with frequency $\omega \ge 0$,
and in units such that $\hbar = 1$,
is
\begin{equation} \label{eq:Hamiltonian-osc}
	\hat{H}_N = \hat{T}_{\alpha} + \hat{V} 
	= \sum_{j=1}^N \left( \frac{1}{2m} D_j^2 + \frac{m\omega^2}{2} |\bx_j|^2 \right),
\end{equation}
where
$$
	D_j := -i\nabla_{\bx_j} + \alpha\bA_j(\bx_j)
$$
denotes the magnetically coupled momentum operator of particle $j$
at position $\bx_j \in \R^2$.
Each particle sees an Aharonov--Bohm magnetic flux $2\pi\alpha$ 
attached to every other particle, 
as is given explicitly by the magnetic potentials
\begin{equation} \label{eq:anyon-potential-ideal}
	\bA_j(\bx) := \sum_{\substack{k=1 \\ k \neq j}}^N (\bx - \bx_k)^{-\perp},
	\qquad \bx^{-\perp} := \frac{\bx^\perp}{|\bx|^2} = \frac{(-y,x)}{x^2 + y^2}.
\end{equation}
For reference we take the Hamiltonian $\hat{H}_N$ to act on bosonic states 
$\Psi \in L^2_\sym((\R^2)^N)$,
and there is associated with the free kinetic energy operator $\hat{T}_\alpha$
a natural subspace (form domain) $\domD{N}{\alpha}$ 
consisting of the states $\Psi$ which have a 
finite expectation value for their kinetic energy
(see \cite[Sec.~2.2]{LunSol-14} and \cite[Sec.~1.1]{LarLun-16} for details).
The case $\alpha=0$ then corresponds to bosons and $\alpha=1$ to fermions,
with 
$\domD{N}{\alpha=0} = H^1_\sym(\R^{2N})$ and 
$\domD{N}{\alpha=1} = U^{-1} H^1_\asym(\R^{2N})$
the Sobolev spaces of
symmetric/antisymmetric square-integrable functions having square-integrable
first derivatives.
In the latter case we have used the singular gauge transformation
$U^{-1}\nabla_{\bx_j}U = i\bA_j$, with
$$
	U\colon L^2_{\sym/\asym} \to L^2_{\asym/\sym}, 
	\qquad (U\Psi)(\sx) := \prod_{1 \le j<k \le N} \frac{z_j-z_k}{|z_j-z_k|} \Psi(\sx),
$$
and with the coordinates here represented by
$z_j = x_j+i y_j \in \C$,
to explicitly switch from 
fermions to their bosonic representation via flux attachment.
The same transformation can be used to show that the full spectrum must be 
periodic in $\alpha$ up to any even integer $2q$, 
by composing 
$\Psi$ with $U^{-2q}$ which preserves symmetry.
Also, one could equivalently have chosen to model everything in terms of fermions
with statistics parameter $\beta := \alpha-1$.

As is very well known, the harmonic oscillator
ground-state energy 
$$
	E_0(N) := \inf \textup{spec} \,\hat{H}_N 
	= \inf_{\Psi \in \domD{N}{\alpha} \setminus \{0\}} \frac{ \langle\Psi,\hat{H}_N\Psi\rangle }{ \langle\Psi,\Psi\rangle }
$$ 
is for bosons $E_0(N) = \omega N$, while for (spinless) fermions 
$E_0(N) \sim \frac{\sqrt{8}}{3}\omega N^{3/2}$
as $N \to \infty$ due to the Pauli principle and the filling of one-body states.
For fermions 
allowed to have $\nu \ge 1$ different spin states 
(or particles obeying Gentile intermediate statistics \cite{Gentile-40,Gentile-42})
it is a simple exercise to show using the same asymptotics that 
$E_0(N) \sim \frac{\sqrt{8}}{3}\nu^{-1/2}\omega N^{3/2}$.
However, for 
anyons with statistics parameter $\alpha$ it has now been established
\cite{LunSol-13a,LunSol-13b,LunSol-14,LarLun-16}
that 
\begin{equation} \label{eq:rigorous-bounds}
	C_1 \,j_{\alpha_N}' \,\omega N^{3/2}
	\ \le \ E_0(N) \ \le \ 
	C_2 \,\omega N^{3/2},
\end{equation}
for some universal constants $C_1 \le \sqrt{8}/(3j_1')$ 
and $C_2 \ge \sqrt{8}/3$.
Here $j_\ordnu'$ for $\ordnu>0$ denotes the first positive zero of 
the derivative of the Bessel function $J_\ordnu$ of the first kind, 
satisfying (see \cite{LarLun-16})
$$
	\sqrt{2\ordnu} \le j_\ordnu' \le \sqrt{2\ordnu(1+\ordnu)}
	\qquad \text{(and $j_0' := 0$)}.
$$
The order $\ordnu = \alpha_N$ involved is given by the `fractionality' 
of $\alpha$ as measured by
\begin{equation} \label{eq:alpha-N}
	\alpha_N 
		:= \min\limits_{p \in \{0,1,\ldots,N-2\}} \min\limits_{q \in \Z} |(2p+1)\alpha - 2q|.
\end{equation}
This expression has the peculiar many-body limit \cite[Prop.~5]{LunSol-13a}
(see Figure~\ref{fig:popcornplots})
$$
	\alpha_* := \lim_{N \to \infty} \alpha_N 
		= \inf_{N \ge 2} \alpha_N
		= \left\{ \begin{array}{ll}
		\frac{1}{\nu}, & \text{if $\alpha = \frac{\mu}{\nu} \in \Q$ reduced, $\mu$ \emph{odd} and $\nu \ge 1$,} \\
		0, & \text{otherwise.}
		\end{array}\right.
$$
In particular, the lower bound in \eqref{eq:rigorous-bounds} depends on $\alpha$ 
as $\sqrt{\alpha_*} = \nu^{-1/2}$ for small odd-numerator fractions
and tends to zero with $N$ for even-numerator and irrational numbers.

\begin{figure} 
	\centering
	\begin{tikzpicture}
		\node [above right] at (0,0) {\includegraphics[scale=0.65, trim=0.4cm 0cm 0cm 0cm]{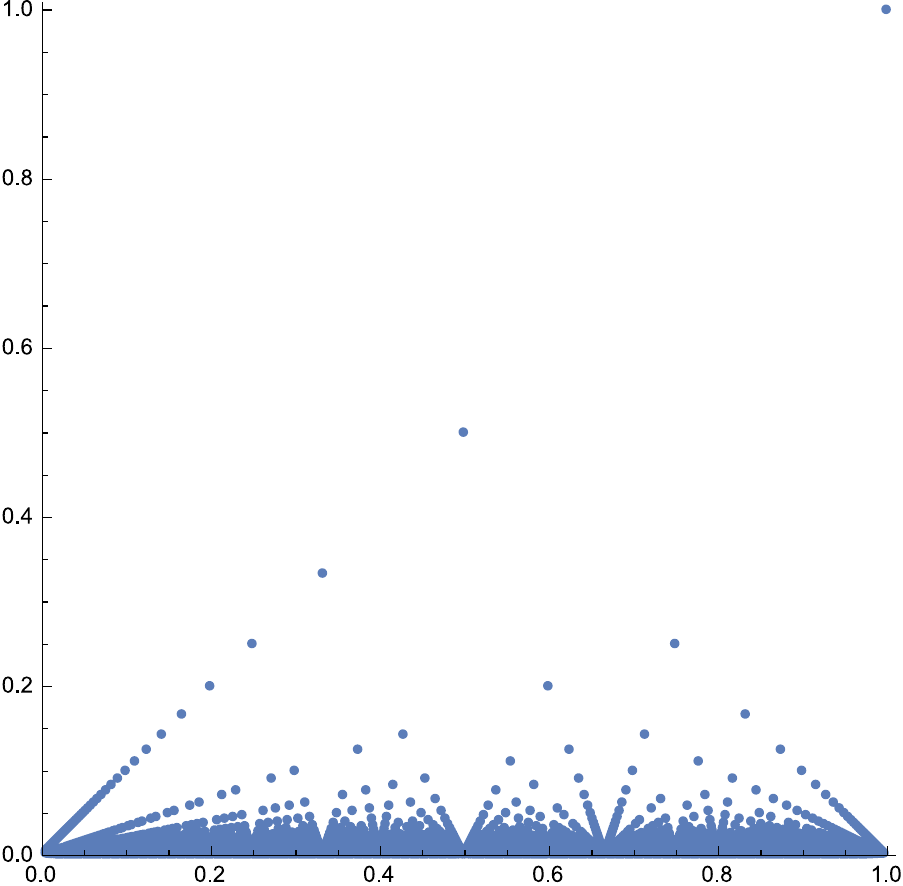}};
		\node [above right] at (5.70,0) {\scalebox{0.8}{$\alpha$\hspace{-20pt}}};
		\node [above right] at (-0.1,5.95) {\scalebox{0.8}{$\alpha_*$\hspace{-20pt}}};
	\end{tikzpicture}
	\begin{tikzpicture}
		\node [above right] at (0,0) {\includegraphics[scale=0.65, trim=0.1cm 0cm 0cm 0cm]{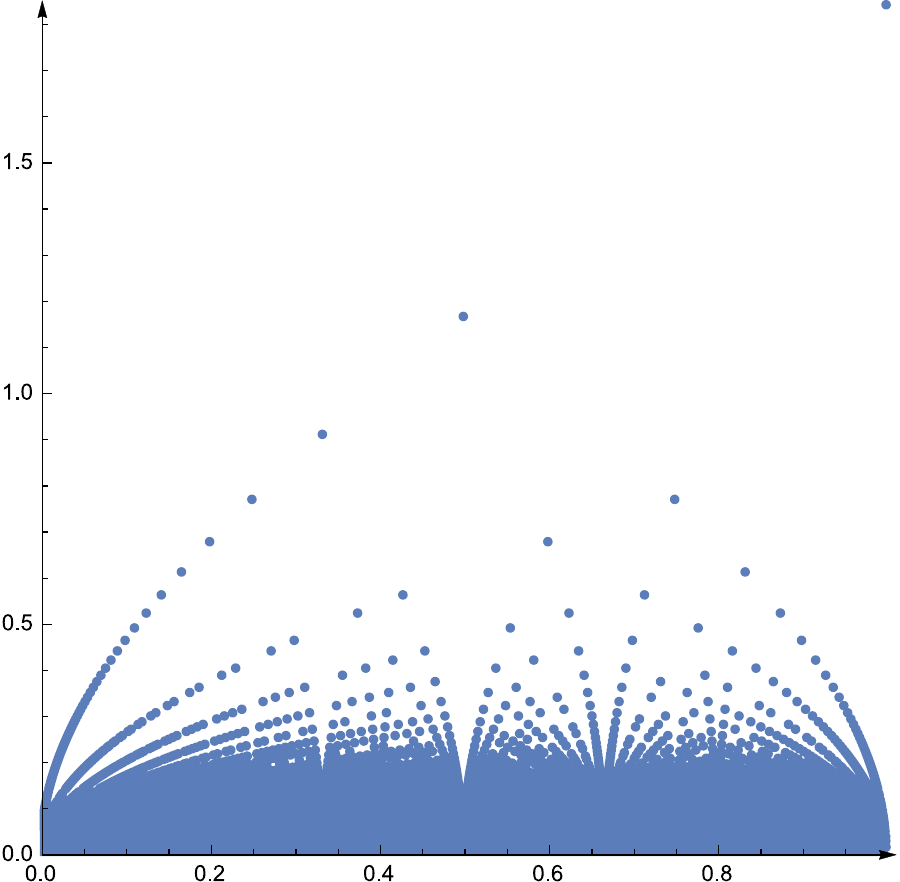}};
		\node [above right] at (5.85,0) {\scalebox{0.8}{$\alpha$\hspace{-20pt}}};
		\node [above right] at (0,5.95) {\scalebox{0.8}{$j_{\alpha_*}'$\hspace{-20pt}}};
	\end{tikzpicture}
	\caption{Plots of $\alpha_*$ respectively $j_{\alpha_*}'$ for 
		$0 \le \alpha \le 1$. These can be continued to all $\alpha \in \R$
		using periodicity and reflection (conjugation) symmetry.
		}
	\label{fig:popcornplots}
\end{figure}

In addition to the above, 
for an arbitrary state $\Psi$ with fixed total angular momentum $L \in \Z$
one also has the bound
\cite{ChiSen-92,Sen-92,Myrheim-99}
\begin{equation} \label{eq:CS-bound}
	\langle\Psi,\hat{H}_N\Psi\rangle  \ \ge \ 
	\omega \left( N + \left|L + \alpha \frac{N(N-1)}{2} \right| \right),
\end{equation}
where, if $L = -\alpha\binom{N}{2}$
(i.e. an average angular momentum of $-\alpha$ per particle pair)
would be achieved exactly on some state,
then the remaining bound is just the ground-state energy for bosons.
As discussed in \cite{ChiSen-92}, since $E_0(N) = O(N^{3/2})$ 
the bound \eqref{eq:CS-bound} implies that as the number
of particles increases there will be more and more level crossings
in the ground state to different angular momenta, resulting in 
a qualitative picture (see \cite[Fig.~1]{ChiSen-92})
with some interesting features in common with Figure~\ref{fig:popcornplots}.
Note that for every finite $N$, the ground-state curve is conjectured to be
continuous in $\alpha$ and piecewise smooth, however the 
number of such pieces grows like $\sqrt{N}$.

On the other hand, an approach that has been used extensively
in the literature to understand the anyon gas
(see the reviews and references therein) 
is to employ the `average-field approximation'\footnote{This 
is also known as the mean-field approximation in the literature, 
although it is useful to make
a distinction between the names;
cf. \cite{Wilczek-90,LunRou-15}.}
\begin{equation} \label{eq:avg-field}
	E_0(N) \approx \inf_{\substack{\varrho \ge 0 \\ \int\varrho = N}} \int_{\R^2} 
		\Biggl( \frac{\pi|\alpha|}{m} \varrho(\bx)^2 + 
		\frac{m\omega^2}{2}|\bx|^2 \varrho(\bx)
		\Biggr) d\bx
		= \frac{\sqrt{8}}{3} \sqrt{|\alpha|} \,\omega N^{3/2}.
\end{equation}
Here
it has been assumed that the anyons see each other via an approximately constant 
magnetic field $B(\bx) \sim 2\pi\alpha \varrho(\bx)$, 
with $\varrho(\bx)$ the local
density of particles, and hence they each have a lowest-Landau-level energy
$|B|/(2m) \sim \pi|\alpha|\varrho/m$.
In \cite{LunRou-15} it has been shown rigorously that such an approximation 
is indeed correct in the limit 
of almost-bosonic anyons 
(i.e. $\alpha \to 0$; see also \cite{CorLunRou-16}) in a confining trap,
however one needs to be careful with 
what is meant exactly with the approximation
and how such a limit is performed since strictly speaking the anyons cannot be
ideal but extended (see \cite{Trugenberger-92b} and Section~\ref{sec:extended} below).
Also note that the periodicity for ideal anyons mentioned above is not 
naturally implemented in \eqref{eq:avg-field}, 
so we must at least expect to replace $\alpha$ with its periodization 
$\alpha_2$ from \eqref{eq:alpha-N}.

The main question raised from the bounds \eqref{eq:rigorous-bounds} and 
\eqref{eq:CS-bound} is whether for certain
$\alpha$ such that $\alpha_* \ll \alpha_2$
--- most notably for even-numerator rational $\alpha$ 
such as $\alpha=2/3$, for which $\alpha_*=0$ ---
and for particular states such that $L \sim -\alpha\binom{N}{2}$,
the true ground-state energy $E_0(N)$ could be considerably lower than the one 
\eqref{eq:avg-field} expected from average-field theory.
Building on \cite{LunSol-13b},
we shall here explore the possibility that this is actually the case.

\section{A local exclusion principle for anyons} \label{sec:exclusion}

In order to understand the origin of the peculiar dependence of the above 
energy bounds on $\alpha_*$, and of the form of the corresponding proposed 
trial states, we first need to briefly discuss the findings in 
\cite{LunSol-13a,LunSol-13b,LarLun-16} 
of a local exclusion principle for anyons.

Normally when one talks about an exclusion principle for identical particles
one has in mind an occupation picture, where only
a limited number of particles can sit in each distinguishable one-body state.
The prime example is of course the usual Pauli exclusion principle for fermions,
although various extensions have also been discussed in the literature 
\cite{Gentile-40,Gentile-42,Haldane-91,DasOuv-94,Isakov-94,Wu-94,HanLeiVie-01}
(the role of such generalized exclusion in the context of anyons has
also been reviewed in \cite{CanJoh-94}).
Sometimes the notion is generously extended to concern exclusion of 
coincident points in the configuration space,
and anyons have often been pointed out to obey such exclusion, 
either because it is required for their topological definition or because
their singular
(for ideal anyons) magnetic interaction forces the wave function to vanish
on the diagonals, 
just like fermions do by means of antisymmetry (see also \cite{BorSor-92}).
However, note that such a notion of exclusion also applies to the hard-core 
Bose gas whose thermodynamic ground-state energy
vanishes in the dilute limit 
(in analogy with the non-interacting gas)
in two and higher dimensions
\cite{LeeYan-57,Dyson-57,Schick-71,LieYng-98,LieYng-01,LieSeiSolYng-05}.

For anyons whose statistics is generated by a true many-body interaction 
(or a very complicated geometry),
a stronger notion of exclusion is required, and such a notion has been developed 
in \cite{LunSol-13a,LunSol-13b,LarLun-16} in the form of an effective 
repulsive long-range pair interaction.
Namely, recall that
the effect of the statistics is a change of phase of the anyonic
wave function by $e^{i\alpha\pi}$ whenever two particles are interchanged via a 
simple continuous loop in configuration space,
or a total phase $(1+2p)\alpha\pi$ whenever 
such an exchange loop at the same time encloses $p$ other particles.
On the other hand, particles are also allowed to have pairwise relative angular
momenta, 
and such momenta are by continuity restricted to only change the phase by an 
\emph{even} multiple of $\pi$.
One way of viewing this condition (in the magnetic gauge picture) 
is that the many-body wave function is modeled as a
bosonic (or fermionic) one and thus it needs to be antipodal-symmetric
(resp. antisymmetric) with respect to the relative coordinate, hence
$\pi$-periodic (resp. anti-periodic) in the relative angle.
Another way to see it (in the anyon gauge or geometric picture) is that the wave function
is a section of a locally flat complex line bundle 
with the topological continuity condition
that its phase around such a loop should jump by 
$(1+2p)\alpha\pi$ plus $2\pi$ times an arbitrary winding number.
Assuming then that the relative momentum or winding is the even
integer $-2q$ if the particles 
are orbiting in a reversed (for $\alpha>0$) direction in order to cancel 
as much of the magnetic or topological phase as possible, 
we arrive at a total phase
$(2p+1)\alpha - 2q$ times $\pi$ per exchange for the particle pair.
In the kinetic energy,
any nonzero remainder 
phase of this sort 
gives rise to a centrifugal-barrier 
repulsion
\begin{equation} \label{eq:statistical-repulsion}
	V_{\textrm{stat}}(r) = |(2p+1)\alpha - 2q|^2\frac{1}{r^2} 
	\ \ge \ \frac{\alpha_N^2}{r^2}
	\ \ge \ \frac{\alpha_*^2}{r^2},
\end{equation}
where $r$ denotes the relative distance of the particle pair,
and we have taken the infimum over all possibilities
$p \in \{0,1,\ldots,N-2\}$ and $q \in \Z$ to obtain the lower bounds in terms of 
$\alpha_N \ge \alpha_*$. 
In the case of $\alpha = \mu/\nu$ being an arbitrary reduced fraction 
with an \emph{odd} numerator $\mu$ and a positive denominator $\nu$, 
it turns out using simple number theory \cite[Prop.~5]{LunSol-13a}
that this phase mismatch can never be completely cancelled, 
and in fact $\alpha_* = 1/\nu > 0$.
On the other hand, if $\mu$ is an \emph{even} number it is evident 
that cancellation
is possible for particular values of $p$ and $q$, and therefore $\alpha_* = 0$.
For irrational values of $\alpha$, one can use that any such number can be 
approximated arbitrarily well by both even- and odd-numerator rational numbers 
and hence $\alpha_* = 0$ 
(although note that a very large $N$ may be required in such a process).

A geometric interpretation of the potential \eqref{eq:statistical-repulsion}
is that there is non-trivial curvature (magnetic flux)
sitting at each of the enclosed particles
but effectively seen at the center $r=0$ of the particle pair,
and its presence is felt by the kinetic energy in the form of an
effective potential.
The situation is from this perspective 
analogous to that of a free quantum particle moving on a cone 
\cite{tHooft-88,DesJac-88,KayStu-91}, here
with its apex angle depending on $\alpha$ and the number of enclosed particles.

In \cite{LunSol-13a} and \cite{LarLun-16} it has been shown rigorously by
means of a family of magnetic Hardy inequalities that such an effective pairwise 
inverse-square `statistical repulsion' \eqref{eq:statistical-repulsion}
indeed arises in the many-anyon system. 
Although the effect is in some sense local and weighted only linearly in the 
number of particles 
(in contrast to a usual pair-interaction term in the Hamiltonian), 
it is still of long-range type 
and sufficiently strong to produce a `degeneracy pressure' 
(represented concretely in the form of Lieb--Thirring inequalities;
cf. \cite{LieThi-75,LunSol-13b})
and consequently non-trivial energy bounds in terms of $\alpha_*$
for the ideal or dilute anyon gas, such as \eqref{eq:rigorous-bounds}.
We also stress that the method 
used to obtain the effective pair potential \eqref{eq:statistical-repulsion}, 
which was 
introduced in \cite{LunSol-13a} and developed to encompass more general 
situations in \cite{LarLun-16},
is well suited for numerical investigations of lower bounds to 
the ground-state energy.

\section{Constructing anyonic trial states} \label{sec:trial}

In order to match the available lower bounds for $E_0(N)$ from above,
it was in \cite{LunSol-13b}
suggested to study variational trial states of the form
$$
	\Psi = \Phi \psi_\alpha \ \in L^2_{\textup{sym}}(\R^{2N}),
$$
with $\Phi \in L^1_{\loc,\sym}$ a locally integrable regularizing factor, 
$N = \nu K$ a suitable sequence of particle numbers and, 
in the case of $\alpha$ being 
an even-numerator reduced fraction $\alpha = \mu/\nu \in [0,1]$, 
\begin{align} \label{eq:trial-even}
	\psi_\alpha &:= \prod_{j<k} |z_{jk}|^{-\alpha} 
		\,\cS\left[ \prod_{q=1}^\nu 
		\prod_{(j,k) \in \cE_q} (\bar{z}_{jk})^\mu 
		\right] \prod_{l=1}^N \varphi_0(\bx_l), 
\end{align}
while for odd numerators $\mu$,
\begin{equation} \label{eq:trial-odd}
	\psi_\alpha := \prod_{j<k} |z_{jk}|^{-\alpha} 
		\,\cS\left[ \prod_{q=1}^\nu 
		\prod_{(j,k) \in \cE_q} (\bar{z}_{jk})^\mu 
		\bigwedge_{k=0}^{K-1} \varphi_k \,(\bx_{l \in \cV_q})
		\right].
\end{equation}
Here $z_{jk} := z_j - z_k$ are the pairwise relative complex coordinates
with the usual identification $\C \ni z_j \leftrightarrow \bx_j \in \R^2$,
and we have grouped, or `colored', the particles into $\nu$ different colors 
where $G_q = (\cV_q,\cE_q)$ denotes the complete graph 
over each such group of $|\cV_q| = K$ vertices=particles (cf. Figure~\ref{fig:clustering}).
The symmetrization $\cS$ over all the particles then amounts to 
symmetrization over all such colorings,
and can be viewed as passing from a set of distinguishable particles (by color)
to indistinguishable (cf. \cite{RegGoeJol-08}).
The $\varphi_k$, $k=0,1,2,\ldots$, are the 
eigenstates (ordered by increasing energy) 
of a corresponding one-body Hamiltonian
\begin{equation} \label{eq:Hamiltonian_1}
	\hat{H}_1 = \frac{1}{2m}
		\bigl(-i\nabla_\bx + \bA_\textrm{ext}(\bx)\bigr)^2 + V(\bx), 
\end{equation}
and in \eqref{eq:trial-odd} we have formed the Slater determinant of the $K$ 
first such states in the variables of each color group $\cV_q$
to obtain matching symmetry.

\begin{figure}
	\centering
	\begin{tikzpicture}[>=stealth']
		\def\x{0}
		\def\y{0}
		\def\c{red}
		\coordinate (p1) at (0+\x,0+\y);
		\coordinate (p2) at (0+\x,1+\y);
		\coordinate (p3) at (1+\x,1+\y);
		\coordinate (p4) at (1+\x,0+\y);
		\draw [fill,\c] (p1) circle [radius = 0.04];
		\draw [fill,\c] (p2) circle [radius = 0.04];
		\draw [fill,\c] (p3) circle [radius = 0.04];
		\draw [fill,\c] (p4) circle [radius = 0.04];
		\draw [\c] (p1) -- (p2) -- (p3) -- (p4) -- (p1);
		\draw [\c] (p1) -- (p3);
		\draw [\c] (p2) -- (p4);

		\def\x{2}
		\def\y{0}
		\def\c{green}
		\coordinate (p1) at (0+\x,0+\y);
		\coordinate (p2) at (0+\x,1+\y);
		\coordinate (p3) at (1+\x,1+\y);
		\coordinate (p4) at (1+\x,0+\y);
		\draw [fill,\c] (p1) circle [radius = 0.04];
		\draw [fill,\c] (p2) circle [radius = 0.04];
		\draw [fill,\c] (p3) circle [radius = 0.04];
		\draw [fill,\c] (p4) circle [radius = 0.04];
		\draw [\c] (p1) -- (p2) -- (p3) -- (p4) -- (p1);
		\draw [\c] (p1) -- (p3);
		\draw [\c] (p2) -- (p4);
		
		\def\x{1}
		\def\y{2}
		\def\c{blue}
		\coordinate (p1) at (0+\x,0+\y);
		\coordinate (p2) at (0+\x,1+\y);
		\coordinate (p3) at (1+\x,1+\y);
		\coordinate (p4) at (1+\x,0+\y);
		\draw [fill,\c] (p1) circle [radius = 0.04];
		\draw [fill,\c] (p2) circle [radius = 0.04];
		\draw [fill,\c] (p3) circle [radius = 0.04];
		\draw [fill,\c] (p4) circle [radius = 0.04];
		\draw [\c] (p1) -- (p2) -- (p3) -- (p4) -- (p1);
		\draw [\c] (p1) -- (p3);
		\draw [\c] (p2) -- (p4);
		
		\node [below right] at (0,0) {\scalebox{0.9}{$\cV_1$}};
		\node [below right] at (2,0) {\scalebox{0.9}{$\cV_2$}};
		\node [below right] at (1,2) {\scalebox{0.9}{$\cV_3$}};
		\node [below right] at (0.95,0.2) {\scalebox{0.9}{$j$}};
		
		\node [above right] at (4.0,1.2) {$\Rightarrow$};

		\def\x{6}
		\def\y{0}
		\draw [arrows=->,thick,] 	(3.2+\x,0.3+\y) -- (3.7+\x,0.95+\y);
		\draw [arrows=->,thick,] 	(3.2+\x,0.3+\y) -- (2.7+\x,-0.3+\y);
		\def\c{red}
		\coordinate (p1) at (0.3+\x,0.2+\y);
		\coordinate (p2) at (0.0+\x,3.2+\y);
		\coordinate (p3) at (2.6+\x,2.9+\y);
		\coordinate (p4) at (3.2+\x,0.3+\y);
		\draw [fill,\c] (p1) circle [radius = 0.04];
		\draw [fill,\c] (p2) circle [radius = 0.04];
		\draw [fill,\c] (p3) circle [radius = 0.04];
		\draw [fill,\c] (p4) circle [radius = 0.04];
		\draw [\c] (p1) -- (p2) -- (p3) -- (p4) -- (p1);
		\draw [\c] (p1) -- (p3);
		\draw [\c] (p2) -- (p4);
		\def\c{green}
		\coordinate (p1) at (0.1+\x,0.1+\y);
		\coordinate (p2) at (0.3+\x,2.8+\y);
		\coordinate (p3) at (2.4+\x,3.2+\y);
		\coordinate (p4) at (2.8+\x,0.1+\y);
		\draw [fill,\c] (p1) circle [radius = 0.04];
		\draw [fill,\c] (p2) circle [radius = 0.04];
		\draw [fill,\c] (p3) circle [radius = 0.04];
		\draw [fill,\c] (p4) circle [radius = 0.04];
		\draw [\c] (p1) -- (p2) -- (p3) -- (p4) -- (p1);
		\draw [\c] (p1) -- (p3);
		\draw [\c] (p2) -- (p4);
		\def\c{blue}
		\coordinate (p1) at (0.2+\x,0.5+\y);
		\coordinate (p2) at (0.1+\x,3.0+\y);
		\coordinate (p3) at (3.2+\x,3.1+\y);
		\coordinate (p4) at (2.6+\x,0.4+\y);
		\draw [fill,\c] (p1) circle [radius = 0.04];
		\draw [fill,\c] (p2) circle [radius = 0.04];
		\draw [fill,\c] (p3) circle [radius = 0.04];
		\draw [fill,\c] (p4) circle [radius = 0.04];
		\draw [\c] (p1) -- (p2) -- (p3) -- (p4) -- (p1);
		\draw [\c] (p1) -- (p3);
		\draw [\c] (p2) -- (p4);

		\draw [dashed] (0.2+\x,0.35+\y) circle [radius = 0.50];
		\draw [dashed] (0.1+\x,3.0+\y) circle [radius = 0.50];
		\draw [dashed] (2.8+\x,3.15+\y) circle [radius = 0.50];
		\draw [dashed] (2.85+\x,0.3+\y) circle [radius = 0.50];

		\node [below right] at (3.3+\x,0.5+\y) {\scalebox{0.9}{$\bx_j$}};
		\node [below right] at (3.3+\x,1.45+\y) {\scalebox{0.9}{$\alpha\bA_j$}};
		\node [below right] at (2.2+\x,-0.1+\y) {\scalebox{0.9}{$\bJ_j$}};
		\node [below left] at (-0.2+\x,2.8+\y) {\scalebox{0.9}{$\cV_1^*$}};
	\end{tikzpicture}
	\caption{An illustration of a coloring of $N=12$ particles with
		$\nu=3$ colors into $K=4$ clusters. Each colored edge corresponds
		to one unit $-\mu$ of pairwise angular momentum.
		Also shown is the contribution to the magnetic potential
		$\alpha\bA_j$ and the current $\bJ_j$ of particle $j$ due solely 
		to the cluster $\cV_1^*$.}
	\label{fig:clustering}
\end{figure}
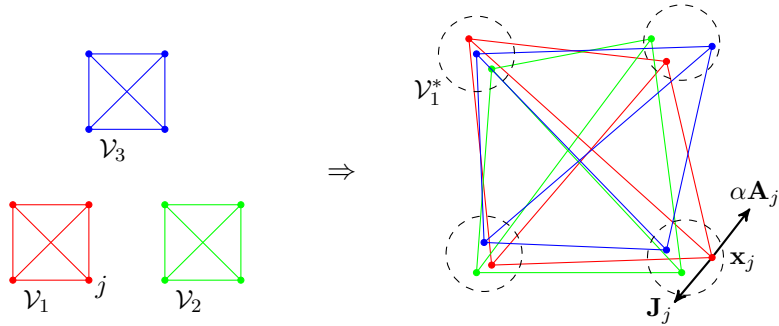

In the harmonic oscillator case 
$\bA_\textrm{ext} = 0$, $V(\bx) = m\omega^2|\bx|^2/2$,
these $N$-body states have angular momentum
\begin{equation} \label{eq:trial-momentum}
	L = -\mu\nu\binom{K}{2}
	= -\alpha\binom{N}{2} + \alpha\frac{\nu-1}{2}N,
\end{equation}
for \eqref{eq:trial-even} and for certain `magic' numbers 
$K$ in \eqref{eq:trial-odd} corresponding to filled shells.
The symmetrized quantity in \eqref{eq:trial-even}
is then a homogeneous polynomial in $\bar{z}_j$ which, 
multiplied with the Gaussian $e^{-m\omega|\sz|^2/2}$, 
$|\sz|^2 = \sum_{j=1}^N |z_j|^2$,
coincides exactly with the (complex-conjugated bosonic)
Laughlin--Read--Rezayi states for the fractional quantum Hall effect
\cite{ReaRez-99,CapGeoTod-01,RegGoeJol-08}.
Also note that, if the $\varphi_k$ are instead taken to be the
lowest-Landau-level states of a constant magnetic field with cyclotron
frequency $\omega$ in symmetric gauge,
$$
	\varphi_k(z) \propto \bar z^k e^{-m\omega |z|^2/2},
	\quad k = 0,1,2,\ldots,
$$
then the case $\alpha=1/2$ in \eqref{eq:trial-odd} corresponds to the
Moore--Read (Pfaffian) states \cite{MooRea-91,CapGeoTod-01}
(cf. also \cite{Girvin-etal-90,WenZee-90} where pairing of semions has
been discussed).

It is known that these states possess clustering properties, so that 
for example the symmetric polynomial\footnote{The normalization factor here is 
chosen to simplify the identities below. The number of terms in the 
symmetrized expression is $(\nu K)!/(\nu! (K!)^\nu)$.}
$$
	f_{N = \nu K}(\sz) := \frac{1}{(\nu!)^{K-1}} \,
	\cS\left[ \prod_{q=1}^\nu 
		\prod_{(j,k) \in \cE_q} (z_{jk})^\mu
		\right],
	\quad \mu = 2,4,6,\ldots,
$$
(together with a confining factor such as the Gaussian)
exhibits clusters of $\nu$ particles. 
This, as well as other interesting and useful properties of such symmetric 
polynomials, have also been observed by means of an identification
with correlators of certain conformal field theories 
(see e.g. \cite{HanHerSimVie-16,ArdKedSto-05})
and with Jack polynomials \cite{BerHal-08}.
For instance, one has that if the positions of $\nu$ particles (i.e. one cluster)
are identified, then 
(also compare to Figure~\ref{fig:clustering})
\begin{equation} \label{eq:Jack-clustering}
	f_N(\underbrace{\zeta,\ldots,\zeta}_{\nu\ \text{copies}},z_{\nu+1},z_{\nu+2},\ldots,z_N)
	= \prod_{j=\nu+1}^{N} (\zeta - z_j)^\mu f_{N-\nu}(z_{\nu+1},\ldots,z_N).
\end{equation}
In particular, $f_N$ then vanishes whenever $\nu+1$ or more particles 
are brought together.
Furthermore, if proceeding in this way to group all particles into 
disjoint clusters $\cV_q^*$, $q=1,\ldots,K$, with $|\cV_q^*| = \nu$
(think of complete graphs $G_q^* = (\cV_q^*,\cE_q^*)$ dual to $G_q$ in a sense), 
and then identifying their positions,
say $z_j = \zeta_q$ for $j \in \cV_q^*$, 
one obtains a Laughlin state with exponent $\nu^2\alpha$,
\begin{equation} \label{eq:Jack-Laughlin}
	f_N(\sz) = \prod_{1 \le p < q \le K} (\zeta_p-\zeta_q)^{\nu\mu}.
\end{equation}

Note that this clustering behavior matches very well with both the form of the
magnetic potential $\bA_j$ and the Jastrow prefactor in $\psi_\alpha$. Namely,
the attractive Jastrow factor contracts the clusters and balances with the 
inter-cluster repulsion coming from the Jack polynomial $f_N(\bar\sz)$ 
in such a way that each particle $\bx_j$ sees from any other
cluster $\cV_q^* \notni j$, say located at $\by \leftrightarrow \zeta$ 
at a large distance $r = |\br| = |z_j-\zeta|$, 
the attraction 
$\prod_{k \in \cV_q^*} |z_{jk}|^{-\alpha} \sim r^{-\nu\alpha} = r^{-\mu}$ 
and at the same time the effective repulsion $\sim |z_j - \zeta|^\mu = r^\mu$ 
from \eqref{eq:Jack-clustering}.
Also, the total contribution to the magnetic potential seen by particle $\bx_j$
from this cluster is 
$\alpha\bA_j(\bx_j; \bx_{k \in \cV_q^*}) \sim \nu\alpha \br^{-\perp}$,
while at the same time the particle has an orbital angular momentum
around the cluster with an opposite current contribution 
$\bJ_j = -i\psi_\alpha^{-1}\nabla_j \psi_\alpha(\bx_j; \bx_{k \in \cV_q^*}) \sim -\mu \br^{-\perp}$, 
again thanks to \eqref{eq:Jack-clustering} (and complex conjugation).
One should also observe that (see Figure~\ref{fig:clustering}), 
due to the balance between Jastrow attraction and 
Jack repulsion, clusters are formed out 
of particles with different colors, i.e. in different groups $\cV_q$.
Furthermore, every particle has exactly one edge in $\cE_q$
going to exactly one particle in every other cluster, 
namely the particle of the same color,
and this is what gives the orbital angular momentum contribution 
$(\bar{z}_{jk})^\mu$.
In this way there is a natural cancellation between magnetic flux and
angular momentum on the level of each individual particle.
The same holds 
in the case of the odd-numerator states, which however have 
an additional
repulsion and possibly angular momentum coming from the Slater determinant
in \eqref{eq:trial-odd}.

\begin{figure}[t]
	\centering
	\begin{tikzpicture}
		\node [above right] at (0  ,4.2) {\includegraphics[scale=0.32]{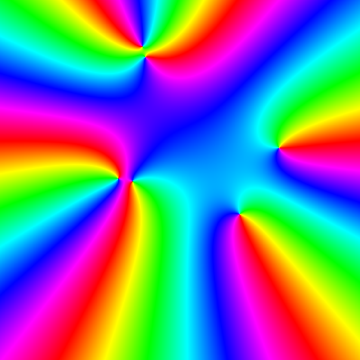}};
		\node [above right] at (4.2,4.2) {\includegraphics[scale=0.32]{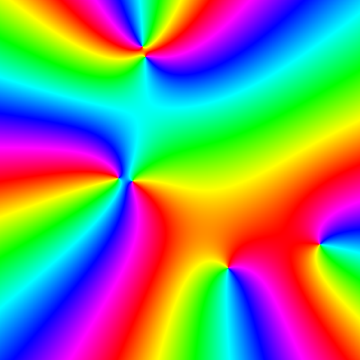}};
		\node [above right] at (8.4,4.2) {\includegraphics[scale=0.32]{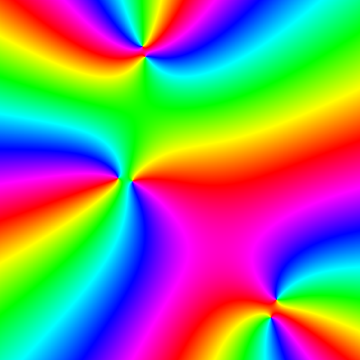}};
		\node [above right] at (0  ,0) {\includegraphics[scale=0.32]{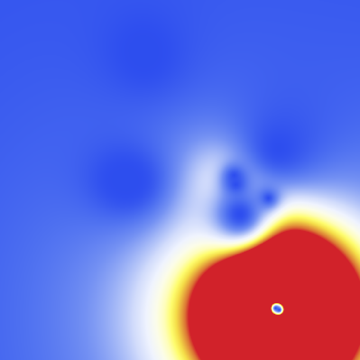}};
		\node [above right] at (4.2,0) {\includegraphics[scale=0.32]{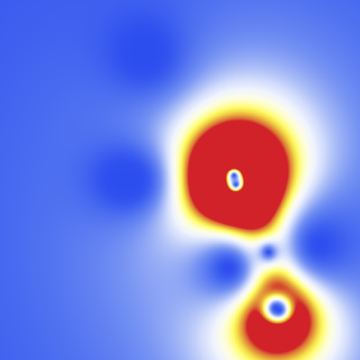}};
		\node [above right] at (8.4,0) {\includegraphics[scale=0.32]{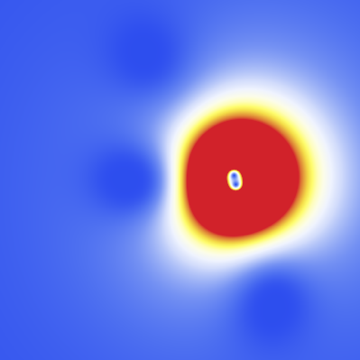}};
		\draw [dashed,thick] (1.75,3.60) circle [radius = 0.24];
		\draw [dashed,thick] (1.55,2.16) circle [radius = 0.24];
		\draw [dashdotted,thick] (2.76,2.18) circle [radius = 0.21];
		\draw [dashdotted,thick] (3.27,0.71) circle [radius = 0.21];
		\draw [dotted,thick] (		3.17,1.96) circle [radius = 0.17];
		\draw [dotted,thick] (4.2+	3.17,1.36) circle [radius = 0.17];
		\draw [dotted,thick] (8.4+	3.17,0.82) circle [radius = 0.17];
	\end{tikzpicture}
	\caption{%
		A sequence of three particle configurations chosen to illustrate the properties
		of the trial states $\Psi$, here with $\alpha=2/3$, $N=12$,
		and $\Phi = \Phi_{r_0 = 1.3}$ as defined in \eqref{eq:Phi-parameter}. 
		We fix the positions of 11 particles in two 3-clusters (dashed circles),
		two 2-clusters (dash-dotted circles),
		and a single particle at three intermediate positions 
		between the 2-clusters (dotted circles).
		Then $\arg \Psi$ (top row) and $|\Psi|^2$ (bottom row)
		are plotted as functions of the remaining particle position
		on the square $[-10,10]^2$.
		One may note that each 3-cluster binds two vortices, 
		which produce the desired orbital angular momentum
		and also cancel the attraction from the cluster very effectively.
		The attraction is much stronger from a 2-cluster than from a single particle,
		but entirely absent from 3-clusters.
		}
	\label{fig:clusterplots-1p}
\end{figure}

In the fully clustered picture \eqref{eq:Jack-Laughlin},
\begin{equation} \label{eq:cluster-gauge}
	\Psi = \Phi\psi_\alpha \sim \prod_{1 \le p < q \le K} 
		\left( \frac{ \overline{\zeta_p-\zeta_q} }{|\zeta_p-\zeta_q|} \right)^{\nu\mu}
\end{equation}
becomes the necessary gauge transformation $U^{-\nu\mu}$
to remove the overall statistical
effects of the clusters 
(here we have $\nu$ copies of the even integer $\mu$ 
because there are $\nu$ particles moving in each cluster,
each seeing a flux $\nu\alpha=\mu$ from every other cluster).
Note that the role of $\Phi$ is to regularize the singular short-scale dependence 
of $\psi_\alpha$ arising upon bringing particles very close together
(the Jastrow factor in $\psi_\alpha$
diverges with each pair like $|z_{jk}|^{-\alpha}$),
and in \eqref{eq:cluster-gauge} we have assumed that this has effectively 
removed the singular factor
$\prod_{(j,k) \in \cE_q^*} |z_{jk}|^{-\alpha}$ from within each cluster.
We also note that the states \eqref{eq:trial-even} and \eqref{eq:trial-odd}
naturally generalize for $\alpha \in \Z$ to the correct gauge copies 
$\Psi = U^{-\alpha} \Psi_0$
of the bosonic resp. fermionic ground states, 
$\Psi_0 = \otimes^N \varphi_0$ resp. $\Psi_0 = \wedge_{k=0}^{N-1} \varphi_k$,
in this case leaving out the need for the regulator $\Phi$.

With the assumption that the total energy increases with the 
total degree of one-particle
states $\varphi_k$, the odd-numerator states clearly have a higher energy
than the even-numerator ones, simply enforced by the symmetry constraint.
One needs to explain, however, why one cannot just take the same states 
$\psi_\alpha$ but shifted to a reducible fraction
$\alpha = \mu/\nu = \mu'/\nu'$, with $\mu' = k\mu$, $\nu' = k\nu$, $k \ge 2$.
Note first that the necessary properties of the states may not be valid for
such fractions and indeed 
there are certain assumptions on irreducibility in 
the context of Jack polynomials \cite{BerHal-08},
but let us proceed anyway with the discussion,  
aiming for a better understanding of the argument.

Within the class of even or odd states we expect that the size of clusters,
i.e. the denominator $\nu$ resp. $\nu'$, dictates the energy of the 
regulator $\Phi$ which therefore favors the irreducible case $k=1$.
Also note that we cannot shift from an even state to an odd one by 
extending the fraction by $k$ in this way, 
but one could certainly take $k$ to be even and thereby shift an odd state into 
an even one, and thus argue that the energy should then become lower
(consider for example $\alpha = 2/6$ instead of $1/3$ 
or, for $\alpha=1$, 
Cooper pairs instead of a single Slater determinant of fermions).
To argue against this possibility we need to study the pairwise 
structure of the states closer.

\begin{figure}[t]
	\centering
	\begin{tikzpicture}
		\node [above right] at (0  ,4.2) {\includegraphics[scale=0.32]{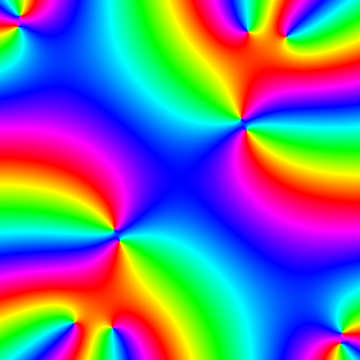}};
		\node [above right] at (4.2,4.2) {\includegraphics[scale=0.32]{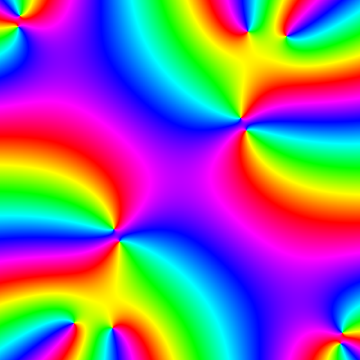}};
		\node [above right] at (8.4,4.2) {\includegraphics[scale=0.32]{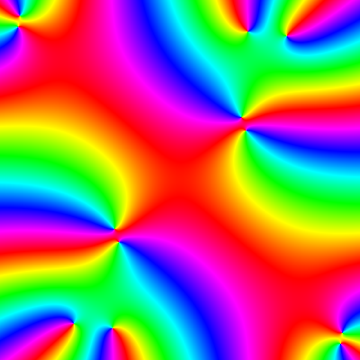}};
		\node [above right] at (0  ,0) {\includegraphics[scale=0.32]{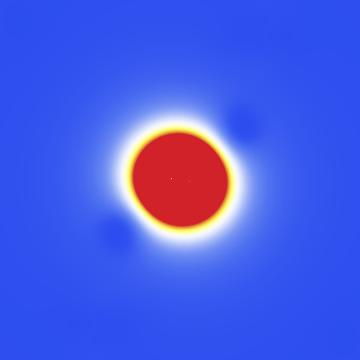}};
		\node [above right] at (4.2,0) {\includegraphics[scale=0.32]{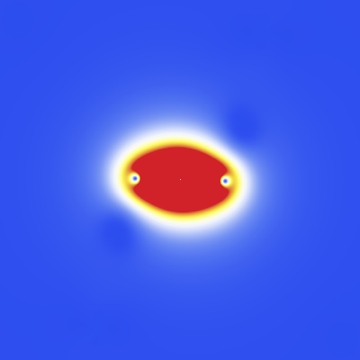}};
		\node [above right] at (8.4,0) {\includegraphics[scale=0.32]{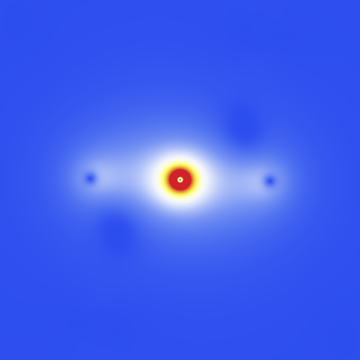}};
		\draw [dotted,thick] (2.17,6.38) circle [radius = 0.95];
	\end{tikzpicture}
	\caption{%
		A similar plot as in Figure~\ref{fig:clusterplots-1p} 
		with the same state $\Psi$.
		Now three 3-clusters are fixed, 
		one particle takes three different positions along the horizontal axis,
		while the relative coordinate of the remaining pair of particles 
		is plotted on the square $[-20,20]^2$ with the center of mass
		of the pair fixed at the origin.
		In relative coordinates the picture is 
		antipodal-symmetrized,
		and one may also note that on a circle with radius equal to the nearest
		3-cluster distance (dotted) the total phase circulation is $4\pi$.
		In this way the probability of finding a third particle within
		an interparticle distance may possibly be increased without a 
		significant cost in exclusion energy.
		}
	\label{fig:clusterplots-2p}
\end{figure}

First note that the pairwise relative angular momentum for any 
(shifted or not) even state
$\psi_{\alpha = \mu'/\nu'}$ always comes in multiples of $\mu'$.
The statistical repulsion \eqref{eq:statistical-repulsion} 
then always gives a positive energy unless for some 
$p,q \in \{0,1,2,\ldots\}$
$$
	(1+2p)\alpha = q\mu'
	\quad \Rightarrow \quad 
	q\nu' = 1+2p,
$$
which requires $\nu'$ to be odd, so that $k=1$ 
(and $\nu$ is already odd if $\mu$ was even).
We also note that if any $\nu$-particle clusters happen to be enclosed in 
the 2-anyon exchange loops, they each
contribute $\nu$ to $p$, that is $2\nu\alpha$ to the magnetic phase,
and at the same time a matching $-2\mu$ to the relative momentum
(one $-\mu$ for each of the anyons orbiting around the cluster).
Furthermore, on a length scale such that a typical pair of particles has 
exactly one multiple of $\mu$ as orbital angular momentum, 
a full cancellation with the magnetic phase demands $p = (\nu-1)/2$ 
(this is indeed an integer 
in the even-numerator case),
i.e. about half of the particles of a cluster are enclosed.
On the very smallest scales, 
i.e. much less than the average interparticle distance,
we can accept a phase mismatch and strong 
repulsion, to be controlled by $\Phi$, in analogy with e.g. the 
hard-core Bose gas whose energy vanishes logarithmically with low density 
in two dimensions \cite{Schick-71,LieYng-01}.
In the case $\alpha = 2/3$ it is natural that as the scale then
increases a bit we first see exactly one enclosed anyon, $p=1$, 
of the $\nu=3$ cluster of which the particle pair is taken, 
and then additional full clusters on the scale of the average interparticle distance
(cf. \cite[Fig.~3]{LarLun-16} and Figure~\ref{fig:clusterplots-2p}). 
Hence this allows for a full cancellation of the exchange phase 
over large scales in this particular state,
thus reducing its overall statistical repulsion 
on large regions of the configuration space.
However, in general it seems that there needs to be a delicate balance between 
$\Phi$ and $\psi_\alpha$ in order to obtain such special probability distributions,
and this remains the least understood aspect of these trial states at the moment
(see also \cite{Lundholm-13}).

\section{Ideal anyons in a harmonic trap} \label{sec:oscillator}

At this point one might 
worry about actually computing (or at least bounding) the energy 
of the proposed trial states. Fortunately, however, 
it turns out that $\psi_\alpha$ given in \eqref{eq:trial-even} for even-numerator $\alpha$
is an \emph{exact} (but singular) eigenfunction of the harmonic oscillator
Hamiltonian $\hat{H}_N$ with
(see \cite{Chou-91b,Chou-91a,MurLawBhaDat-92,Myrheim-99})
\begin{equation} \label{eq:exact-energy}
	\hat{H}_N\psi_\alpha 
	= \omega(N + \deg \psi_\alpha) \,\psi_\alpha,
\end{equation}
where the degree of the state is, by \eqref{eq:trial-momentum},
$$
	\deg\psi_\alpha = -\alpha\binom{N}{2}-L = -\alpha\frac{\nu-1}{2}N.
$$
Being a singular eigenfunction means that the identity \eqref{eq:exact-energy}
holds wherever $\psi_\alpha$ is smooth, namely
outside of the fat diagonal of the configuration space 
\begin{equation} \label{eq:diagonals}
	\bDelta 
	:= \{ \sx = (\bx_1,\ldots,\bx_N) \in (\R^2)^N : \text{$\exists\ j \neq k$ s.t. $\bx_j = \bx_k$} \}.
\end{equation}
Since $\psi_\alpha$ is not a true eigenstate (normalizable and in the domain) 
of the operator $\hat{H}_N$, its formal energy being
$$
	\omega(N + \deg \psi_\alpha) = \omega\left(1-\alpha\frac{\nu-1}{2}\right)N 
	< \omega N
$$
is not a contradiction to \eqref{eq:CS-bound}.
Also, by contrast, note that for $\psi_\alpha$ with odd-numerator $\alpha$
$$
	\deg \psi_\alpha \sim \nu \frac{\sqrt{8}}{3} K^{3/2} = O(N^{3/2}),
$$
although in this case $\psi_\alpha$ does not have the structure of an 
exact eigenfunction since it involves a polynomial in both $z_j$ and $\bar{z}_j$.

Two possible choices of regularizing symmetric functions $\Phi$ mentioned in \cite{LunSol-13b}, 
giving rise to the expected pairwise short-scale behavior 
$\sim |z_{jk}|^\alpha$ in $\Psi$,
could be
\begin{equation} \label{eq:Phi-parameter}
	\Phi_{r_0} = \prod_{j<k}|z_{jk}|^{2\alpha}(r_0^2 + |z_{jk}|^2)^{-\alpha},
\end{equation}
with a parameter $r_0>0$ to be optimized over, 
or the parameter-free (but less smooth)
\begin{equation} \label{eq:Phi-neighbor}
	\Phi_0 = \prod_{j=1}^N \prod_{k=1}^{\nu-1} |z_{j\,\nn_k(j)}|^{\alpha},
\end{equation}
where $\nn_k(j)$ denotes the $k$th nearest neighbor of particle $j$
(among particles of the set $A \subseteq \{1,2,\ldots,N\}$ 
if instead writing $\nn_k(j;A)$).
However, as seen below, an ansatz closer to the Bijl--Jastrow form 
\begin{equation} \label{eq:Phi-Jastrow}
	\Phi_{\mathrm{BJ}} = \prod_{j<k} f(|z_{jk}|),
\end{equation}
or the Dyson form \cite{Dyson-57}
\begin{equation} \label{eq:Phi-Dyson}
	\Phi_{\mathrm{D}} = f(|z_{2\,1}|) f(|z_{3\,\nn_1(3;1,2)}|) \ldots f(|z_{N\,\nn_1(N;1,2,\ldots,N-1)}|),
\end{equation}
as used for 2D Bose gases with suitable two-particle correlations $f$ 
(see \cite{LieSeiSolYng-05}), could be better.

We note that taking $\Phi_0$ as in \eqref{eq:Phi-neighbor} as a regulator raises
the degree of the state $\Psi = \Phi_0\psi_\alpha$ for even numerators
to formally (if \eqref{eq:exact-energy} were still to hold) produce the energy
$$
	\omega(N + \deg \Psi) 
	= \left(1 + \alpha(\nu-1) - \alpha\frac{\nu-1}{2}\right) \omega N
	= \left(1 + \alpha\frac{\nu-1}{2}\right) \omega N,
$$
which by \eqref{eq:trial-momentum} exactly matches the lower bound \eqref{eq:CS-bound}.
We also note that for odd-numerator $\alpha$
and magic (i.e. shell-filling) numbers $K$,
$$
	\omega(N + \deg \Psi)
	\sim \omega \,\nu \,\frac{\sqrt{8}}{3}K^{3/2} 
	= \frac{\sqrt{8}}{3} \nu^{-1/2} \omega N^{3/2} 
	= \frac{\sqrt{8}}{3} \sqrt{\alpha_*} \,\omega N^{3/2},
$$
which matches
both the average-field approximation \eqref{eq:avg-field} 
for $\alpha=\alpha_*$ and,
for small $\alpha_*$ and up to the value of the numerical constant,
the improved rigorous lower bound \eqref{eq:rigorous-bounds} that was
recently obtained by means of a Lieb--Thirring inequality \cite{LarLun-16}
as a consequence of the statistical repulsion \eqref{eq:statistical-repulsion}.

Using the above observations, we then shift the problem of estimating the 
ground-state energy $E_0(N)$ from above for even-numerator fractions 
to concern only the energy of the regulator $\Phi$:

\begin{prop} \label{prop:Phi-bound-oscillator}
	Assume $\Phi \in H^1_{\loc}(\R^{2N};\R)$ 
	is such that $\Psi = \Phi\psi_\alpha \in \domD{N}{\alpha}$
	and $(\nabla\Phi)\psi_\alpha \in L^2(\R^{2N};\C^N)$,
	where $\alpha$ is an even-numerator fraction.
	Then
	\begin{align*}
		\langle\Psi, \hat{H}_N \Psi\rangle &= 
		\int_{\R^{2N}} \left( \frac{1}{2m}\sum_{j=1}^N |D_j \Psi|^2 + \frac{m\omega^2}{2} |\sx|^2 |\Psi|^2 \right) d\sx \\
		&= \omega\left( N + \deg \psi_\alpha \right) \int_{\R^{2N}} |\Psi|^2 \,d\sx
		+ \frac{1}{2m} \int_{\R^{2N}} \sum_{j=1}^N |\nabla_j \Phi|^2 |\psi_\alpha|^2 \,d\sx.
	\end{align*}
\end{prop}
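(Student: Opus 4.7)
The strategy is a ground-state transformation: I use that $\psi_\alpha$ is a pointwise eigenfunction of $\hat H_N$ on the complement of the diagonal $\bDelta$, with eigenvalue $E := \omega(N+\deg\psi_\alpha)$ from \eqref{eq:exact-energy}, to reduce the quadratic form of $\Psi = \Phi\psi_\alpha$ to the two terms in the statement.

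The first step is the magnetic product rule. Since $\Phi$ is real-valued,
\[
D_j\Psi \;=\; \Phi\, D_j\psi_\alpha \;-\; i(\nabla_j\Phi)\psi_\alpha,
\]
and expanding $|D_j\Psi|^2$ while simplifying the cross term via the elementary identity $\mathrm{Im}(\overline{\psi_\alpha}D_j\psi_\alpha) = -\tfrac12 \nabla_j|\psi_\alpha|^2$ gives, pointwise off $\bDelta$,
\[
|D_j\Psi|^2 \;=\; \Phi^2 |D_j\psi_\alpha|^2 \;+\; |\nabla_j\Phi|^2|\psi_\alpha|^2 \;+\; \tfrac12 \nabla_j(\Phi^2)\cdot\nabla_j|\psi_\alpha|^2.
\]

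The second step replaces $|D_j\psi_\alpha|^2$ by a pure multiplication. I rewrite \eqref{eq:exact-energy} as $\sum_j D_j^2\psi_\alpha = (2mE - m^2\omega^2|\sx|^2)\psi_\alpha$, multiply by $\overline{\psi_\alpha}$, and take real parts, using the pointwise relation $\mathrm{Re}(\overline{\psi}D_j^2\psi) = |D_j\psi|^2 - \tfrac12\Delta_j|\psi|^2$ (valid off $\bDelta$ thanks to $\nabla_j\cdot\bA_j = 0$ there), to obtain
\[
\sum_j |D_j\psi_\alpha|^2 \;=\; \tfrac12 \sum_j \Delta_j|\psi_\alpha|^2 \;+\; (2mE - m^2\omega^2|\sx|^2)|\psi_\alpha|^2.
\]
Substituting this into the previous expansion, adding the harmonic term $m^2\omega^2|\sx|^2|\Psi|^2$, and dividing by $2m$, the $|\sx|^2$ pieces cancel exactly against the confinement and the two remaining $|\psi_\alpha|^2$-derivative contributions assemble into the single total divergence $\frac{1}{4m}\sum_j \nabla_j\cdot(\Phi^2\nabla_j|\psi_\alpha|^2)$.

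The only remaining task, and the main technical obstacle, is to show that this divergence integrates to zero. A bare divergence theorem is inadmissible because $|\psi_\alpha|^2 \sim |z_{jk}|^{-2\alpha}$ near pairwise coincidences. I would instead introduce a smooth cutoff $\chi_{\epsilon,R}$ vanishing on an $\epsilon$-tubular neighborhood of $\bDelta$ and outside a ball $B_R$, derive the identity on the support of $\chi_{\epsilon,R}$ (where the divergence produces only a well-defined boundary flux), and pass to the limits $R\to\infty$ and $\epsilon\downarrow 0$. Decay at infinity is immediate from the Gaussian built into $\psi_\alpha$; the $\epsilon$-limit is where the hypotheses do their work --- the form-domain condition $\Psi\in\cD^N_\alpha$ encodes precisely the vanishing of $\Phi$ (roughly $\sim|z_{jk}|^{2\alpha}$) required to kill the flux of $\Phi^2\nabla_j|\psi_\alpha|^2$ through the $\epsilon$-tubes, while $(\nabla\Phi)\psi_\alpha \in L^2$ guarantees finiteness of the regulator term appearing in the statement and, by subtraction through the pointwise expansion, of every other integral in the identity.
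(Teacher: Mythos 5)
Your computation is sound and reaches the stated identity, but it is organized quite differently from the paper's proof, and the comparison is instructive. The paper works at the level of integrals: it first invokes the density result $H^1(\R^{2N}) = H^1_0(\R^{2N}\setminus\bDelta)$ (the diagonal has codimension two; see \cite[Lemma 3]{LunSol-14}) to reduce to $\Phi\in C_c^\infty(\R^{2N}\setminus\bDelta)$, so every subsequent integration by parts is trivially justified; it then moves $D_j\cdot D_j$ onto $\psi_\alpha$, applies \eqref{eq:exact-energy} to $\int\overline{\Psi}\Phi\hat{H}_N\psi_\alpha$, and is left with an antisymmetric cross term equal to $\frac{i}{2m}\sum_j\int|\Phi|^2\,\nabla_j\cdot(\bJ_j[\psi_\alpha]+\alpha\bA_j|\psi_\alpha|^2)$, which vanishes because the integrand is \emph{pointwise} zero (the probability current of a stationary state is divergence-free). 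You instead work pointwise off $\bDelta$, convert $|D_j\psi_\alpha|^2$ into a multiplication operator plus $\tfrac12\Delta_j|\psi_\alpha|^2$ via the eigenvalue equation, and collect everything into the single total divergence $\frac{1}{4m}\sum_j\nabla_j\cdot\bigl(\Phi^2\nabla_j|\psi_\alpha|^2\bigr)$; your algebra (the product rule, the sign of the cross term, the identity $\mathrm{Re}(\bar\psi D_j^2\psi)=|D_j\psi|^2-\tfrac12\Delta_j|\psi|^2$ using $\nabla_j\cdot\bA_j=0$ off $\bDelta$, and the exact cancellation of the $|\sx|^2$ terms) all checks out. The structural difference is that your residual term is a genuine divergence whose \emph{integral} must be shown to vanish, whereas the paper's residual integrand vanishes identically; your route therefore requires a boundary-flux estimate that the paper's does not.

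That flux estimate is the one place where your sketch is thinner than it should be. Near $\bDelta$ one has $\Phi^2\nabla_j|\psi_\alpha|^2\sim\Phi^2|z_{jk}|^{-2\alpha-1}$, so the flux through the $\epsilon$-tube is of order $\Phi^2\epsilon^{-2\alpha}$ and one needs $\Phi=o(|z_{jk}|^{\alpha})$ there. The hypothesis $\Psi\in\domD{N}{\alpha}$ does not directly deliver pointwise decay of $\Phi$ --- it only gives integrability of $\sum_j|D_j\Psi|^2$ --- so the vanishing of the flux follows only along a suitable subsequence $\epsilon_n\downarrow 0$ via an averaging (mean-value in $\epsilon$) argument, or, more cleanly, by invoking the same density theorem the paper uses and passing to the limit in the final identity with the help of $(\nabla\Phi)\psi_\alpha\in L^2$ and $\Psi\in L^2$. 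Either repair is routine, so this is not a fatal gap, but as written the assertion that the form-domain condition ``encodes precisely the vanishing of $\Phi$'' is a claim rather than a proof, and the heuristic rate $\Phi\sim|z_{jk}|^{2\alpha}$ you quote is a property of the concrete regulators \eqref{eq:Phi-parameter}--\eqref{eq:Phi-neighbor}, not of the stated hypotheses.
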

\begin{proof}
	By taking an appropriate approximating sequence in 
	$H^1(\R^{2N}) = H^1_0(\R^{2N} \setminus \bDelta)$ 
	(see \cite[Lemma 3]{LunSol-14})
	we may assume without loss of generality that
	$\Phi \in C_c^\infty(\R^{2N} \setminus \bDelta)$
	(smooth with compact support outside $\bDelta$).
	Then 
	$$
		D_j \Psi = (-i\nabla_j \Phi) \psi_\alpha + \Phi D_j\psi_\alpha
		\ \ \in C_c^\infty(\R^{2N} \setminus \bDelta),
	$$
	and
	$$
		\sum_j \int |D_j \Psi|^2 \,d\sx
		= \sum_j \int \overline{D_j\Psi} \cdot (-i\nabla_j\Phi)\psi_\alpha \,d\sx
		+ \sum_j \int \overline{D_j\Psi} \cdot \Phi D_j\psi_\alpha \,d\sx,
	$$
	where, using that $\alpha\bA_j$ is real and a partial integration,
	\begin{align*}
		&\int \overline{D_j\Psi} \cdot \Phi D_j\psi_\alpha \,d\sx
		= \int i\nabla_j\overline{\Psi} \cdot \Phi D_j\psi_\alpha \,d\sx
		+ \int \alpha\bA_j\overline{\Psi} \cdot \Phi D_j\psi_\alpha \,d\sx \\
		&= \int \overline{\Psi} (-i\nabla_j \Phi) \cdot D_j\psi_\alpha \,d\sx
		+ \int \overline{\Psi} \Phi (-i\nabla_j \cdot D_j\psi_\alpha) \,d\sx
		+ \int \overline{\Psi}\Phi \alpha\bA_j \cdot D_j\psi_\alpha \,d\sx \\
		&= -\int \overline{(-i\nabla_j \Phi)\psi_\alpha} \cdot 
			\Big( D_j\Psi - (-i\nabla_j\Phi)\psi_\alpha \Big) \,d\sx
		+ \int \overline{\Psi}\Phi D_j \cdot D_j\psi_\alpha \,d\sx.
	\end{align*}
	Hence,
	\begin{align*}
		\int &\left( \frac{1}{2m} \sum_j |D_j \Psi|^2 + \frac{m\omega^2}{2} |\sx|^2 |\Psi|^2 \right) d\sx \\
		&= \int \overline{\Psi}\Phi \hat{H}_N \psi_\alpha \,d\sx
		+ \frac{1}{2m}\int \sum_j |\nabla_j \Phi|^2 |\psi_\alpha|^2 \,d\sx \\
		&\qquad
		+ \frac{1}{2m}\left( \int \sum_j \overline{D_j \Psi} \cdot (-i\nabla_j \Phi)\psi_\alpha \,d\sx
		- \int \sum_j \overline{(-i\nabla_j \Phi)\psi_\alpha} \cdot D_j \Psi \,d\sx \right),
	\end{align*}
	and by \eqref{eq:exact-energy} 
	it then remains to prove that the last line is zero.
	Expanding the derivative and collecting the terms, 
	and making another partial integration 
	(now for $\Phi^2 = |\Phi|^2 \in C_c^\infty(\R^{2N} \setminus \bDelta)$), 
	one finds that it equals
	$$
		\frac{i}{2m} \sum_j \int_{\R^{2N}} |\Phi|^2 \nabla_j \cdot (\bJ_j[\psi_\alpha] + \alpha\bA_j|\psi_\alpha|^2) \,d\sx,
	$$
	where $\bJ[u] := \frac{i}{2}(u\nabla\bar{u} - \bar{u}\nabla u)$.
	Finally we may use that $\nabla \cdot \bA = 0$ and the eigenfunction equation
	\eqref{eq:exact-energy} and its complex conjugate 
	to show that
	$$
		\sum_j \nabla_j \cdot (\bJ_j[\psi_\alpha] + \alpha\bA_j|\psi_\alpha|^2) = 0
	$$
	on $\R^{2N} \setminus \bDelta$, which proves the proposition.
\end{proof}

The energy in the state $\Psi$
thus depends solely on the correlations of the weight 
$|\psi_\alpha|^2$
and its balance with the regulator $\Phi$, 
which is required to vanish sufficiently fast as particles come together.
However, if these correlations effectively turn out to 
decay faster than the average interparticle spacing then,
in analogy with dilute hard-core bosons \cite{Schick-71,LieYng-01},
there may be room for a smaller energy
(note that $\Phi$ should not have Dirichlet but rather Neumann-type 
boundary conditions at the interparticle scale).

\section{The $R$-extended anyon gas} \label{sec:extended}

That exact eigenstates for the many-anyon problem can be found at all
is far from trivial, 
and the reason for it to hold for the above states 
is that they satisfy a remarkable simplifying identity.
Here we shall consider this identity in detail and greater generality,
in the context of the extended anyon gas.

By an `$R$-extended anyon' we mean that we have replaced the singular 
Aharonov--Bohm flux on each anyon by a uniform field on a 
disk of finite radius $R > 0$.
In other words, we replace \eqref{eq:anyon-potential-ideal} by
(cf. \cite{ChoLeeLee-92,Trugenberger-92b,Mashkevich-96,LunRou-15,LunRou-16,LarLun-16})
\begin{equation} \label{eq:anyon-potential-extended}
	\bA_j(\bx) := \sum_{k \neq j} \frac{(\bx-\bx_k)^\perp}{|\bx-\bx_k|_R^2},
	\qquad |\bx|_R := \max\{|\bx|,R\},
\end{equation}
so that, 
$$
	\curl \alpha\bA_j(\bx) 
	= 2\pi\alpha \sum_{k \neq j} \frac{\1_{D(\bx_k,R)}(\bx)}{\pi R^2}
	\ \xrightarrow{R \to 0} \ 
	2\pi\alpha \sum_{k \neq j} \delta_{\bx_k} (\bx),
$$
where $\1_{D(\by,R)}$ 
denotes the indicator function on a disk of 
radius $R$ centered at $\by$.
Note that this form for the magnetic interaction is actually the 
natural one from the perspective 
of emergent anyons \cite{LunRou-16}, for which the size $R$ is implied by the
experimental conditions.
There is also a natural dimensionless parameter in the problem
given by the ratio of the size of
the magnetic flux to the average interparticle distance, 
$\bar\gamma := R\bar\varrho^{1/2}$.
This 
has been
called the `magnetic filling ratio' in 
\cite{Trugenberger-92b,Trugenberger-92,LarLun-16}.

\begin{figure}[t]
	\centering
	\scalebox{0.84}{
	\begin{tikzpicture}
		\node [above right] at (0,0) {\includegraphics[scale=0.82, clip, trim=10pt 0pt 0pt 0pt]{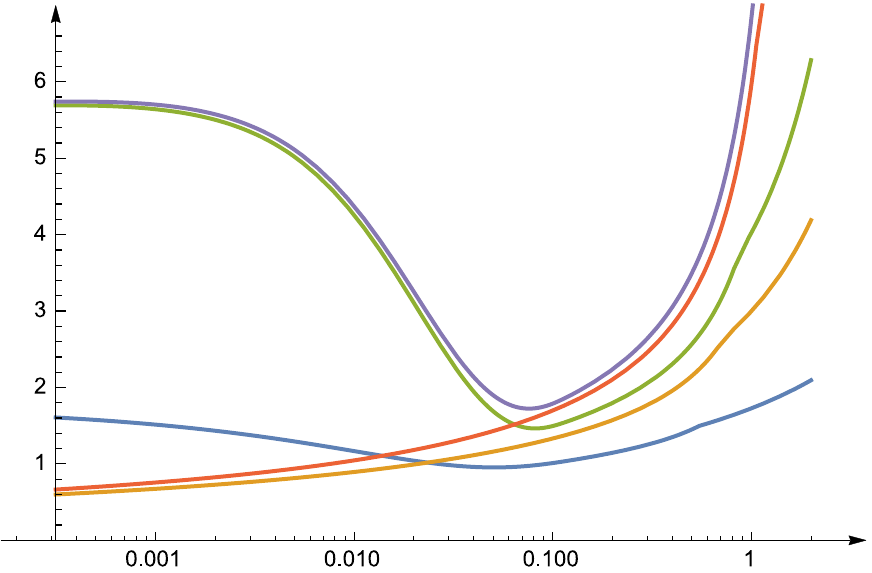}};
		\node [above right] at (7.0,.2) {\scalebox{0.8}{$\bar\gamma$\hspace{-5pt}}};
		\node [above right] at (0.3,0.8) {\scalebox{0.8}{$\alpha_*=0$\hspace{-5pt}}};
		\node [above right] at (0.3,1.35) {\scalebox{0.8}{$\alpha_*=1/3$\hspace{-5pt}}};
		\node [above right] at (0.3,4.04) {\scalebox{0.8}{$\alpha_*=1$\hspace{-5pt}}};
	\end{tikzpicture}
	\begin{tikzpicture}
		\node [above right] at (0,0) {\includegraphics[scale=0.82]{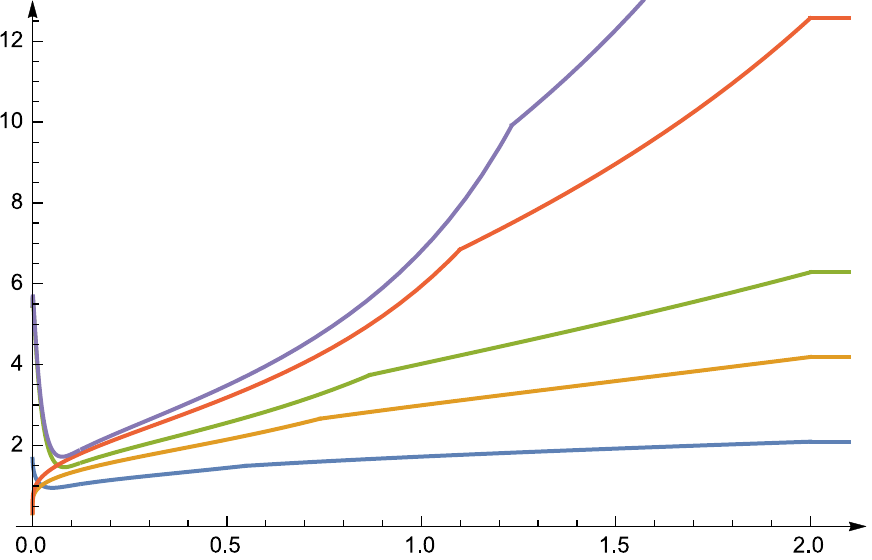}};
		\node [above right] at (7.3,.2) {\scalebox{0.8}{$\bar\gamma$\hspace{-5pt}}};
		\node [above right] at (6,1.04)   {\scalebox{0.8}{$\alpha=1/3$\hspace{-5pt}}};
		\node [above right] at (6,1.74) {\scalebox{0.8}{$\alpha=2/3$\hspace{-5pt}}};
		\node [above right] at (6.35,2.52) {\scalebox{0.8}{$\alpha=1$\hspace{-5pt}}};
		\node [above right] at (6.35,3.9) {\scalebox{0.8}{$\alpha=2$\hspace{-5pt}}};
		\node [above right] at (5.3,4.35) {\scalebox{0.8}{$\alpha=3$\hspace{-5pt}}};
	\end{tikzpicture}
	}
	\caption{The dependence on the density $\bar\gamma$
	of the lower bound 
	$e(\alpha, \bar\gamma)$ for the ground-state energy of the 
	extended anyon gas for some values of $\alpha$,
	with constants and scales chosen for illustrative purposes
	as explained in \cite{LarLun-16}.}
	\label{fig:EnergyVsGammaPlots}
\end{figure}

Based on the long-range local exclusion principle \eqref{eq:statistical-repulsion} 
and further short-range magnetic bounds that arise only 
in this extended context, 
it was shown rigorously in \cite{LarLun-16} that a
homogeneous gas of such $R$-extended anyons satisfies in the thermodynamic limit 
(on a box of side length $L \to \infty$ and with fixed average density 
$\bar\varrho = N/L^2$)
a universal bound for the energy per particle of the form\footnote{We
are taking the $\liminf$ (and assume Dirichlet boundary conditions)
here because, although the sequence is bounded,
it has not yet been proved in general that a limit exists.}
\begin{equation} \label{eq:ext-gas-bound}
	\liminf_{\substack{N, L \,\to\, \infty \\ N/L^2 = \bar\varrho}} \frac{E_0(N)}{N} 
	\ge C e(\alpha,\bar\gamma) \frac{\bar\varrho}{2m},
\end{equation}
where $C$ is a positive universal constant and 
(see Figure~\ref{fig:EnergyVsGammaPlots})
$$
	e(\alpha, \bar\gamma) \sim \left\{ \begin{array}{ll}
		\frac{2\pi}{|{\ln \bar\gamma}|} 
		+ \pi(j_{\alpha_*}')^2 \ge 2\pi \alpha_*, &\quad \bar\gamma \to 0 
			\ (\text{fixed} \ \alpha \neq 0), \\[6pt]
		2\pi|\alpha|, &\quad \bar\gamma \gtrsim 1. 
		\end{array}\right.
$$
This bound interpolates between a dilute regime where the effect of the
statistical repulsion dominates (note that there is also, even for 
$\alpha \in 2\Z \setminus \{0\}$, a strictly positive interaction 
energy which vanishes with the density similarly 
to that of a hard-core 2D Bose gas),
and a dense regime where the dependence on $\alpha$ matches that which is 
expected from average-field theory \eqref{eq:avg-field}.
In \cite{LunRou-15} it was shown that for $R$-extended anyons 
in an external trap $V$
and in a limit such that the filling $\bar\gamma$ 
is high but the statistics parameter $\alpha$ small 
(`almost-bosonic' anyons; see also \cite{CorLunRou-16}),
the average-field approximation is a correct description  
in the sense that the particles become identically distributed in a 
self-generated magnetic field.
However, given the \emph{linear} 
dependence on the strength of the magnetic field $|\alpha|$
(our lower bound \eqref{eq:ext-gas-bound} 
is valid for any $\alpha \in \R$) for high densities, 
and the \emph{periodicity} in $\alpha$ for ideal anyons,
there must be some non-trivial interpolation between these two regimes.

Let us now introduce a convenient notation for an associated scalar
(super)potential (cf. \cite{LunRou-15})
$$
	w_R(\bx) := \left\{ \begin{array}{ll}
		\ln |\bx|, & |\bx| > R, \\
		\ln R + \frac{1}{2}\left(|\bx|^2/R^2 - 1\right), & |\bx| \le R,
		\end{array} \right.
$$
with
$$
	\nabla w_R(\bx) = (\bx)_R^{-1} := \bx/|\bx|_R^2, 
$$
and
$$
	W_R(\sx) := \sum_{\substack{j,k=1 \\ j \neq k}}^N \Delta w_R(\bx_j-\bx_k)
	= 2\pi\sum_{\substack{j,k=1 \\ j \neq k}}^N 
		\frac{\1_{D(0,R)}}{\pi R^2}(\bx_j-\bx_k).
$$
We then have the following property, 
which is essentially a result concerning 
supersymmetry of the corresponding Pauli operator.
It has been discussed in that context in \cite{Girvin-etal-90,ChoLeeLee-92},
however we will here supply a different proof.

\begin{prop} \label{prop:Pauli-identity}
	Let $\Psi_{\pm}(\sx) = e^{\mp \alpha\sum_{j<k} w_R(\bx_j-\bx_k)} f_\mp(\sz)$,
	where $f_+$ is analytic resp. $f_-$ anti-analytic in all the variables $z_j$.
	Then
	$$
		\sum_{j=1}^N D_j^2 \,\Psi_\pm 
		= \pm \alpha W_R \,\Psi_\pm.
	$$
	In particular, $\Psi_{\pm}$ are for $R=0$ generalized 
	zero-energy eigenfunctions of the $N$-anyon kinetic energy operator 
	$\hat{T}_\alpha$ considered on $\R^{2N} \setminus \bDelta$.
\end{prop}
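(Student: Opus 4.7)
The plan is to recognize the identity as a manifestation of the Pauli factorization of the magnetic Laplacian: once $D_j^2$ is written as a product of first-order complex operators, the magnetic curvature appears as a commutator contribution while the remaining first-order factor becomes a gauge-conjugated Cauchy--Riemann operator that annihilates $\Psi_\pm$ by the (anti-)analyticity of $f_\mp$.

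Concretely, I introduce $D_j^\pm := D_j^{(1)} \pm iD_j^{(2)}$ and use the standard identities
$$
	D_j^2 \;=\; D_j^+ D_j^- + \alpha B_j \;=\; D_j^- D_j^+ - \alpha B_j,
	\qquad \alpha B_j := \curl_{\bx_j}(\alpha \bA_j).
$$
From $\nabla w_R(\bx) = \bx/|\bx|_R^2$ together with the definition of $\bA_j$ in \eqref{eq:anyon-potential-extended}, one reads off that $\bA_j = (\nabla_j W)^\perp$ with the scalar super-potential $W(\sx) := \sum_{j<k} w_R(\bx_j-\bx_k)$, and hence $B_j = \Delta_j W$ and $\sum_j B_j = W_R$ by the definition of $W_R$. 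The proposition therefore reduces to the pointwise identity that $\Psi_\pm$ is annihilated by the appropriate first-order factor for each $j$, which in the top-sign case reads $D_j^- \Psi_+ = 0$.

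To verify this, I pass to complex coordinates. Under $\R^2 \cong \C$ the perpendicular operation $\bx \mapsto \bx^\perp$ corresponds to multiplication by $i$, so $\alpha \bA_j$ viewed as a complex field equals $2i\alpha\,\partial_{\bar z_j} W$, and a short calculation yields the gauge-conjugated factorizations
$$
	D_j^+ = -2i\,e^{+\alpha W}\,\partial_{\bar z_j}\,e^{-\alpha W},
	\qquad D_j^- = -2i\,e^{-\alpha W}\,\partial_{z_j}\,e^{+\alpha W}.
$$
Applied to $\Psi_\pm = e^{\mp \alpha W} f_\mp$, the inner exponential cancels, leaving a bare Wirtinger derivative of $f_\mp$ which vanishes by the (anti-)analyticity hypothesis. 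Feeding this back into the Pauli factorization gives $D_j^2 \Psi_\pm = \pm\alpha B_j \Psi_\pm$ pointwise, and summing over $j$ yields the claim.

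The main hurdle is purely orientation bookkeeping: the sign in $\bx^{-\perp}=(-y,x)/|\bx|^2$, the identification $\R^2 \cong \C$, and the convention $\partial_z = \tfrac12(\partial_x - i\partial_y)$ together determine which of $D_j^+$ or $D_j^-$ annihilates $\Psi_+$, and hence whether ``$f_+$ analytic'' should be read as holomorphic or anti-holomorphic in $z_j$ (a consistency check against the trial states \eqref{eq:trial-even}, which carry $\bar z_{jk}$ in the polynomial factor, favors the anti-holomorphic reading). A minor regularity caveat is that $w_R$ is only $C^1$ across $|\bx|=R$, so the Coulomb-gauge identity $\nabla_j \cdot \bA_j = 0$, implicit in the Pauli factorization, holds only a.e.; this is a null set for $R>0$, and for $R=0$ the statement is explicitly restricted to $\R^{2N}\setminus\bDelta$, where $W$ is smooth.
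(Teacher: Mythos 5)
Your proof is correct, and all the sign bookkeeping you were worried about does come out right: with $\partial_{z}=\tfrac12(\partial_x-i\partial_y)$ and $\bx^\perp=(-y,x)$ one indeed gets $D_j^{-}=-2i\,e^{-\alpha W}\partial_{z_j}e^{+\alpha W}$ and $D_j^{+}=-2i\,e^{+\alpha W}\partial_{\bar z_j}e^{-\alpha W}$, so $D_j^{-}$ annihilates $\Psi_+=e^{-\alpha W}f_-$ precisely because $f_-$ is \emph{anti}-analytic, and $D_j^{+}$ annihilates $\Psi_-=e^{+\alpha W}f_+$ because $f_+$ is analytic --- exactly as the proposition assigns them. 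Your route differs from the paper's in packaging rather than in substance: the paper works in the Clifford algebra $\cG(\C^2)$, computes $D_j\Psi_\pm$ directly, and kills the cross terms with the null-vector identity $\bx(1\pm iI)\cdot\by(1\pm iI)=0$, whereas you invoke the operator factorization $D_j^2=D_j^{+}D_j^{-}+\alpha B_j=D_j^{-}D_j^{+}-\alpha B_j$ and the gauge-conjugated Wirtinger operators. These are two faces of the same supersymmetric cancellation (the paper's $(1\mp iI)$ projectors are exactly what isolates $D_j^{\mp}$), but your version is the more standard complex-coordinates argument --- closer to the treatments the paper cites and explicitly departs from --- and has the advantage of making the superpotential structure $\bA_j=(\nabla_j W)^\perp$, $B_j=\Delta_j W$ completely transparent; the Clifford version generalizes more readily to settings where one wants to keep track of spin. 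Two small remarks: the Coulomb-gauge identity $\nabla_j\cdot\bA_j=0$ is not actually needed for the algebraic factorization $D_j^2=D_j^{+}D_j^{-}+\alpha B_j$ (only the curl enters through the commutator $[D_j^{(1)},D_j^{(2)}]=-i\alpha B_j$), so that caveat is spurious, though harmless; and your regularity remark about $w_R$ being only $C^{1,1}$ across $|\bx|=R$ is a fair point that the paper's proof also glosses over --- the identity holds pointwise off the spheres $|\bx_j-\bx_k|=R$ and hence almost everywhere, which is all that is used downstream in Proposition~\ref{prop:Phi-bound-extended}.
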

\begin{proof}
	We find it convenient to work with $\cG(\C^2)$, the complex
	Clifford algebra over $\R^2$, and write for the $\pi/2$-rotation
	$\bx^\perp = (x\be_1+y\be_2)^\perp = \bx I$, 
	i.e. multiplication from the right with the pseudoscalar $I = \be_1\be_2$.
	Note that
	$$
		\nabla_j \Psi_\pm = e^{\mp \alpha\sum_{i<k} w_R(\bx_i-\bx_k)}
		\left( \mp \alpha \sum_{k \neq j}(\bx_j-\bx_k)_R^{-1} f_\mp 
			+ \nabla_j f_\mp \right),
	$$
	and thus
	$$
		D_j \Psi_\pm = e^{\mp \alpha\sum_{i<k} w_R(\bx_i-\bx_k)}
		\left( \alpha \sum_{k \neq j}(\bx_j-\bx_k)_R^{-1} (\pm i + I) f_\mp
			-i\nabla_j f_\mp \right).
	$$
	Furthermore,
	\begin{align*}
		D_j \cdot D_j \Psi_\pm 
		&= \left( \left(D_j e^{\mp \alpha\sum_{i<k} w_R(\bx_i-\bx_k)}\right) 
			-i e^{\mp \alpha\sum_{i<k} w_R(\bx_i-\bx_k)} \nabla_j \right) \\
		&\qquad\cdot \left( \pm i\alpha\sum_{k \neq j}(\bx_j-\bx_k)_R^{-1} (1 \mp iI)f_\mp 
			-i\nabla_j f_\mp \right) \\
		&= \pm \alpha \sum_{k \neq j} \nabla_j \cdot (\bx_j - \bx_k)_R^{-1}(1 \mp iI) \Psi_\pm \\
		&= \pm \alpha \sum_{k \neq j} \Delta w_R(\bx_j - \bx_k) \Psi_\pm,
	\end{align*}
	where the fundamental simplifying identity used 
	is that for any
	$\bx,\by \in \C^2$
	$$
		\bx(1\pm iI) \cdot \by(1\pm iI) = 0,
	$$
	since $I^2 = -1$ and $\bx I \cdot \by I = \bx \cdot \by$.
	We have also used $\nabla \cdot \nabla I = \nabla \cdot \nabla^\perp = 0$,
	and that
	for $z_\pm = x \pm iy$ and $f\colon \C \to \C$ analytic 
	$$
		\nabla(f(z_\pm)) = f'(z_\pm) (\be_1 \pm i\be_2) = f'(z_\pm) \be_1(1\pm iI),
	$$
	and $\Delta f = 0$.
\end{proof}

In the $R$-extended case we therefore take as our trial states 
$\Psi = \Phi \psi_\alpha$ with 
the Jastrow factor in \eqref{eq:trial-even} and \eqref{eq:trial-odd} 
replaced by $e^{-\alpha\sum_{j<k} w_R(\bx_j-\bx_k)}$,
and in the case of the homogeneous gas the $\varphi_k$
are taken to be the eigenstates of the Neumann Laplacian on the square $Q_L$
of side length $L$ (thus $\varphi_0 \equiv L^{-1}$).
Note that these states are regular even without the factor $\Phi$, since
$$
	e^{-\alpha w_R(\bx)} = \left\{ \begin{array}{ll}
		|\bx|^{-\alpha}, & |\bx| > R, \\
		R^{-\alpha} e^{\frac{\alpha}{2}(1-|\bx|^2/R^2)}, & |\bx| \le R.
		\end{array} \right.
$$
However, in order to obtain the correct balance for a low total energy, 
and to take the appropriate limits, we expect that an additional regulator
is still necessary. In particular, in the dilute limit 
$\bar\gamma = R\bar\varrho^{1/2} \to 0$ the Jastrow factor describes an 
attraction which needs to be turned into a short-range repulsion,
as illustrated by the below 
reformulation of the energy in terms of $\Phi$.
	The proof is almost identical to that of Proposition~\ref{prop:Phi-bound-oscillator},
	where the use of the identity \eqref{eq:exact-energy} 
	is replaced by Proposition~\ref{prop:Pauli-identity}.

\begin{prop} \label{prop:Phi-bound-extended}
	Assume $\Phi \in H^1_0(Q_L^N;\R)$ 
	is such that $\Psi = \Phi\psi_\alpha \in \domD{N}{\alpha}$ 
	and $(\nabla\Phi)\psi_\alpha \in L^2(Q_L^N)$,
	where $\alpha \in [0,1]$ is an even-numerator fraction.
	Then
	$$
		\int_{Q_L^N} \sum_{j=1}^N |D_j \Psi|^2 \,d\sx 
		= \int_{Q_L^N} \left( 
			\sum_{j=1}^N |\nabla_j \Phi|^2 + \alpha W_R |\Phi|^2
			\right) |\psi_\alpha|^2 \,d\sx.
	$$
\end{prop}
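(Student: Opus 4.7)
The plan is to transcribe the proof of Proposition~\ref{prop:Phi-bound-oscillator} almost verbatim, with the harmonic eigenvalue relation \eqref{eq:exact-energy} replaced by the Pauli-type identity of Proposition~\ref{prop:Pauli-identity}. The key observation is that in the homogeneous $R$-extended setting one has $\varphi_0 \equiv L^{-1}$, so $\psi_\alpha$ from \eqref{eq:trial-even} (with the Jastrow factor replaced by $e^{-\alpha\sum_{j<k} w_R(\bx_j-\bx_k)}$) is exactly of the form $\Psi_+$ of Proposition~\ref{prop:Pauli-identity}: a decaying exponential times a symmetric polynomial anti-analytic in $\sz$. That proposition therefore supplies the pointwise identity
$$
\sum_{j=1}^N D_j^2 \psi_\alpha = \alpha W_R \,\psi_\alpha \quad \text{on } Q_L^N \setminus \bDelta,
$$
which plays the role of \eqref{eq:exact-energy} in the present setting.

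\textbf{Main computation.} First, by the density lemma cited as \cite[Lemma~3]{LunSol-14}, I reduce to the case $\Phi \in C_c^\infty(Q_L^N \setminus \bDelta)$; the hypotheses $\Psi \in \domD{N}{\alpha}$ and $(\nabla\Phi)\psi_\alpha \in L^2$ guarantee that both sides of the target identity are continuous along such an approximating sequence. For a smooth $\Phi$ compactly supported away from the coincidence set, I expand $D_j\Psi = (-i\nabla_j\Phi)\psi_\alpha + \Phi\,D_j\psi_\alpha$ and repeat the chain of integrations by parts from the proof of Proposition~\ref{prop:Phi-bound-oscillator}. No boundary contributions arise, and the computation yields
\begin{align*}
\int_{Q_L^N} \sum_j |D_j\Psi|^2 \,d\sx
&= \int_{Q_L^N} \bar\Psi\,\Phi \,\sum_j D_j^2\psi_\alpha \,d\sx
+ \int_{Q_L^N} \sum_j |\nabla_j\Phi|^2 |\psi_\alpha|^2 \,d\sx \\
&\quad + \frac{i}{2}\int_{Q_L^N} |\Phi|^2 \sum_j \nabla_j\cdot\bigl(\bJ_j[\psi_\alpha] + \alpha\bA_j|\psi_\alpha|^2\bigr)\,d\sx.
\end{align*}
The first integral on the right-hand side equals $\alpha\int_{Q_L^N} W_R |\Phi|^2|\psi_\alpha|^2\,d\sx$ by the identity above.

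\textbf{Vanishing current and main obstacle.} It remains to show that the last integrand is identically zero on $Q_L^N \setminus \bDelta$. A direct computation shows that in the regularized form \eqref{eq:anyon-potential-extended} one still has $\nabla_j\cdot\bA_j = 0$ pointwise (both inside and outside each flux disk), so the standard magnetic continuity identity reduces the sum of divergences to $-\mathrm{Im}\bigl(\bar\psi_\alpha \sum_j D_j^2 \psi_\alpha\bigr)$, which vanishes because $\alpha W_R$ is real. The main obstacle, as in Proposition~\ref{prop:Phi-bound-oscillator}, is really just the density step: one must check that $C_c^\infty(Q_L^N \setminus \bDelta)$ is dense in a topology that renders both $\int\sum_j|D_j\Psi|^2$ and $\int\sum_j|\nabla_j\Phi|^2|\psi_\alpha|^2$ continuous. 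Because the $R$-regularization makes $\psi_\alpha$ smooth and bounded on all of $Q_L^N$, the density result from \cite{LunSol-14} applies without modification, and the remainder of the proof is identical in structure to the oscillator case.
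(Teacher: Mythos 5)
Your proposal is correct and follows precisely the route the paper takes: the paper's own proof of Proposition~\ref{prop:Phi-bound-extended} is a one-line remark that it is ``almost identical'' to that of Proposition~\ref{prop:Phi-bound-oscillator} with the eigenfunction identity \eqref{eq:exact-energy} replaced by Proposition~\ref{prop:Pauli-identity}, which is exactly what you carry out. Your supporting observations --- that $\nabla_j\cdot\bA_j=0$ still holds pointwise for the regularized potential \eqref{eq:anyon-potential-extended}, that the current-divergence term vanishes because $\alpha W_R$ is real, and that the density reduction is if anything easier since $\psi_\alpha$ is now regular --- are consistent with the paper's argument.
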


In the dilute limit, in which the scattering length of the soft-disk
potential $\alpha W_R$ becomes relatively small,
this again seems to be able to produce a low energy for even-numerator states.
Also, for odd-numerator states, naively estimating the energy of $\Psi$ in terms 
of that of the one-body states $\varphi_k$ of the Slater determinants in
\eqref{eq:trial-odd} yields the tentative bound
$$
	2m E_0(N) \ \lesssim \ \nu\, 2\pi K^2/L^2 
	\ \sim \ 2\pi \alpha_* \bar\varrho \,N,
$$
which again matches the available lower bounds.
Note also that the repulsive pair potential $\alpha W_R$ 
that \emph{emerged} above matches in the dilute limit the 
point interaction conventionally introduced to regularize
ideal anyons \cite{Ouvry-94}.

\section{Conclusions} \label{sec:conclusions}

With the ansatz given by the discussed trial states,
we have reduced the difficult problem of bounding 
the ground-state energy of a system of $N$ abelian
anyons with even-numerator rational statistics parameter
to the study of the $N$-dependence of the quantity
$$
	\frac{ \int_{\R^{2N}} \left(
			\sum_{j=1}^N |\nabla_j \Phi|^2 + \alpha W_R |\Phi|^2
			\right) |\psi_\alpha|^2 \,d\sx 
		}{ \int_{\R^{2N}} |\Phi|^2 |\psi_\alpha|^2 \,d\sx },
$$
which is essentially the energy of a repulsive 2D Bose gas described by $\Phi$ 
but weighted by $|\psi_\alpha|^2$.
One could try to estimate this using the techniques of Dingle, Jastrow and Dyson
(see \cite{Dyson-57,LieSeiSolYng-05} and references therein).
Alternatively, Monte Carlo methods could prove useful in this formulation.
In any case, since the weight $|\psi_\alpha|^2$ is designed so as to cancel
any long-range correlations by means of its clustering properties, and
since the energy of a dilute 2D Bose gas is logarithmically
small \cite{Schick-71,LieYng-01}, the discussed approach indeed looks 
very promising.
Also, if the anyons are not completely free but an additional attraction
is added then it seems rather clear 
from the above expression with suitable 
$\Phi$ that they would prefer to cluster in this way.

Finally, let us remark that if these are indeed the correct (approximative)
ground states for a many-body system of abelian anyons, then they could possibly 
also explain from a more fundamental perspective
the occurrence of such clustering states in the FQHE 
(cf. \cite[pp.~239--240]{Stern-08} and note that the usual Read--Rezayi states are
supposed to be built of $k$-clusters of $\alpha = 2/k$ anyons 
in a zero magnetic field).
Furthermore, the elementary excitations of such an abelian anyon condensed ground state
may, according to well-known properties of clustering states, in turn be 
\emph{non}-abelian anyons.

\begin{ack}
	The idea for using 
	the clustering states \eqref{eq:trial-even} and \eqref{eq:trial-odd}
	for the many-anyon
	problem came during a postdoctoral stay at IH\'ES in the fall 2011
	as part of an EPDI fellowship, and, upon coming across the paper \cite{CapGeoTod-01}
	the following year I found out about their intriguing connection to 
	the Moore--Read and Read--Rezayi states of the FQHE.
	I am especially grateful to Jan Philip Solovej for fruitful discussions and
	collaboration on this topic, and for initiating our mathematical study of 
	anyons in the first place during my postdoc in Copenhagen.
	The plots of 
	Figures~\ref{fig:clusterplots-1p}-\ref{fig:EnergyVsGammaPlots}
	were produced in collaboration with Simon Larson.
	I also thank Michele Correggi, 
	Phan Th\`anh Nam, Fabian Portmann and
	Nicolas Rougerie for valuable discussions and collaborations on closely 
	related subjects, as well as 
	Eddy Ardonne, Hans Hansson, Thierry Jolicoeur, St\'ephane Ouvry,
	Raoul Santachiara, Robert Seiringer and Andrea Trombettoni
	for useful references, comments and discussions.
	Financial support from the Knut and Alice Wallenberg Foundation,
	grant no. KAW 2010.0063,
	and the Swedish Research Council, grant no. 2013-4734,
	is gratefully acknowledged.
\end{ack}


\def\MR#1{} 

\begin{thebibliography}{10}

\bibitem{LeiMyr-77}
J.~M. {Leinaas} and J. {Myrheim}, \emph{{On the theory of identical
  particles}}, Nuovo Cimento B \textbf{37} (1977), 1--23,
  \href{http://dx.doi.org/10.1007/BF02727953}{\path{doi}}.

\bibitem{GolMenSha-81}
G.~A. Goldin, R. Menikoff, and D.~H. Sharp, \emph{Representations of a local
  current algebra in nonsimply connected space and the {A}haronov-{B}ohm
  effect}, J. Math. Phys. \textbf{22} (1981), no.~8, 1664--1668,
  \href{http://dx.doi.org/10.1063/1.525110}{\path{doi}}.

\bibitem{Wilczek-82a}
F. Wilczek, \emph{Magnetic flux, angular momentum, and statistics}, Phys. Rev.
  Lett. \textbf{48} (1982), 1144--1146,
  \href{http://dx.doi.org/10.1103/PhysRevLett.48.1144}{\path{doi}}.

\bibitem{Wilczek-82b}
F. Wilczek, \emph{Quantum mechanics of fractional-spin particles}, Phys. Rev.
  Lett. \textbf{49} (1982), 957--959,
  \href{http://dx.doi.org/10.1103/PhysRevLett.49.957}{\path{doi}}.

\bibitem{Wu-84}
Y.-S. Wu, \emph{Multiparticle quantum mechanics obeying fractional statistics},
  Phys. Rev. Lett. \textbf{53} (1984), 111--114,
  \href{http://dx.doi.org/10.1103/PhysRevLett.53.111}{\path{doi}}.

\bibitem{Souriau-70}
J.-M. Souriau, \emph{Structure des syst\`emes dynamiques}, Ma{\^{i}}trises de
  math\'ematiques, Dunod, Paris, 1970, English translation by R. H. Cushman and
  G. M. Tuynman, Progress in Mathematics, 149, Birkh\"auser Boston Inc.,
  Boston, MA, 1997,
  \url{http://www.jmsouriau.com/structure_des_systemes_dynamiques.htm}.
  \MR{0260238}

\bibitem{DatMurVat-03}
G. Date, M.~V.~N. Murthy, and R. Vathsan, \emph{Classical and quantum mechanics
  of anyons}, arXiv e-prints, 2003,
  \href{http://arxiv.org/abs/cond-mat/0302019}{\path{arXiv:cond-mat/0302019}}.

\bibitem{Forte-92}
S. Forte, \emph{Quantum mechanics and field theory with fractional spin and
  statistics}, Rev. Mod. Phys. \textbf{64} (1992), 193--236,
  \href{http://dx.doi.org/10.1103/RevModPhys.64.193}{\path{doi}}.

\bibitem{Froehlich-90}
J. Fr{\"o}hlich, \emph{Quantum statistics and locality}, Proceedings of the
  {G}ibbs {S}ymposium ({N}ew {H}aven, {CT}, 1989), Amer. Math. Soc.,
  Providence, RI, 1990, pp.~89--142. \MR{1095329}

\bibitem{IenLec-92}
R. Iengo and K. Lechner, \emph{Anyon quantum mechanics and {C}hern-{S}imons
  theory}, Phys. Rep. \textbf{213} (1992), 179--269,
  \href{http://dx.doi.org/10.1016/0370-1573(92)90039-3}{\path{doi}}.

\bibitem{Khare-05}
A. Khare, \emph{{Fractional Statistics and Quantum Theory}}, 2nd ed., World
  Scientific, Singapore, 2005.

\bibitem{Lerda-92}
A. Lerda, \emph{{Anyons}}, Springer-Verlag, Berlin--Heidelberg, 1992.

\bibitem{Myrheim-99}
J. Myrheim, \emph{Anyons}, Topological aspects of low dimensional systems (A.
  Comtet, T. Jolic{\oe}ur, S. Ouvry, and F. David, eds.), Les Houches - Ecole
  d'Ete de Physique Theorique, vol.~69, (Springer-Verlag, Berlin, Germany),
  1999, pp.~265--413,
  \href{http://dx.doi.org/10.1007/3-540-46637-1_4}{\path{doi}}.

\bibitem{Ouvry-07}
S. Ouvry, \emph{{Anyons and lowest Landau level anyons}}, S\'eminaire
  Poincar\'e \textbf{11} (2007), 77--107,
  \href{http://dx.doi.org/10.1007/978-3-7643-8799-0_3}{\path{doi}}.

\bibitem{Stern-08}
A. Stern, \emph{{Anyons and the quantum Hall effect -- A pedagogical review}},
  Ann. Phys. \textbf{323} (2008), no.~1, 204--249, January Special Issue 2008,
  \href{http://dx.doi.org/10.1016/j.aop.2007.10.008}{\path{doi}}.

\bibitem{Wilczek-90}
F. Wilczek, \emph{{Fractional Statistics and Anyon Superconductivity}}, World
  Scientific, Singapore, 1990.

\bibitem{Girvin-04}
S. Girvin, \emph{Introduction to the fractional quantum {H}all effect},
  S\'eminaire Poincar\'e \textbf{2} (2004), 54--74,
  \href{http://dx.doi.org/10.1007/3-7643-7393-8_4}{\path{doi}}.

\bibitem{Goerbig-09}
M.~O. Goerbig, \emph{Quantum {H}all effects}, Lecture notes, 2009,
  \href{http://arxiv.org/abs/0909.1998}{\path{arXiv:0909.1998}}.

\bibitem{Jain-07}
J.~K. Jain, \emph{{Composite fermions}}, Cambridge Univ. Press, 2007.

\bibitem{Laughlin-99}
R.~B. Laughlin, \emph{Nobel lecture: Fractional quantization}, Rev. Mod. Phys.
  \textbf{71} (1999), 863--874,
  \href{http://dx.doi.org/10.1103/RevModPhys.71.863}{\path{doi}}.

\bibitem{StoTsuGos-99}
H. St\"{o}rmer, D. Tsui, and A. Gossard, \emph{The fractional quantum {H}all
  effect}, Rev. Mod. Phys. \textbf{71} (1999), S298--S305,
  \href{http://dx.doi.org/10.1103/RevModPhys.71.S298}{\path{doi}}.

\bibitem{BloDalZwe-08}
I. Bloch, J. Dalibard, and W. Zwerger, \emph{Many-body physics with ultracold
  gases}, Rev. Mod. Phys. \textbf{80} (2008), no.~3, 885--964,
  \href{http://dx.doi.org/10.1103/RevModPhys.80.885}{\path{doi}}.

\bibitem{Cooper-08}
N.~R. {Cooper}, \emph{{Rapidly rotating atomic gases}}, Advances in Physics
  \textbf{57} (2008), no.~6, 539--616,
  \href{http://dx.doi.org/10.1080/00018730802564122}{\path{doi}}.

\bibitem{MorFed-07}
A. Morris and D. Feder, \emph{Gaussian potentials facilitate access to quantum
  {H}all states in rotating {B}ose gases}, Phys. Rev. Lett. \textbf{99} (2007),
  240401, \href{http://dx.doi.org/10.1103/PhysRevLett.99.240401}{\path{doi}}.

\bibitem{RonRizDal-11}
M. Roncaglia, M. Rizzi, and J. Dalibard, \emph{From rotating atomic rings to
  quantum {H}all states}, www.nature.com, Scientific Reports \textbf{1} (2011),
  43, \href{http://dx.doi.org/10.1038/srep00043}{\path{doi}}.

\bibitem{Viefers-08}
S. Viefers, \emph{Quantum {H}all physics in rotating {B}ose-{E}instein
  condensates}, J. Phys. C \textbf{12} (2008), 123202,
  \href{http://dx.doi.org/10.1088/0953-8984/20/12/123202}{\path{doi}}.

\bibitem{Bolotin-etal-09}
K.~I. Bolotin, F. Ghahari, M.~D. Shulman, H.~L. Stormer, and P. Kim,
  \emph{Observation of the fractional quantum {H}all effect in graphene},
  Nature \textbf{462} (2009), 196--199,
  \href{http://dx.doi.org/10.1038/nature08582}{\path{doi}}.

\bibitem{Du-etal-09}
X. Du, I. Skachko, F. Duerr, A. Luican, and E.~Y. Andrei, \emph{Fractional
  quantum {H}all effect and insulating phase of {D}irac electrons in graphene},
  Nature \textbf{462} (2009), 192--195,
  \href{http://dx.doi.org/10.1038/nature08522}{\path{doi}}.

\bibitem{AroSchWil-84}
D. Arovas, J.~R. Schrieffer, and F. Wilczek, \emph{Fractional statistics and
  the quantum {H}all effect}, Phys. Rev. Lett. \textbf{53} (1984), 722--723,
  \href{http://dx.doi.org/10.1103/PhysRevLett.53.722}{\path{doi}}.

\bibitem{LunRou-16}
D. Lundholm and N. Rougerie, \emph{Emergence of fractional statistics for
  tracer particles in a {L}aughlin liquid}, Phys. Rev. Lett. \textbf{116}
  (2016), 170401,
  \href{http://dx.doi.org/10.1103/PhysRevLett.116.170401}{\path{doi}}.

\bibitem{Rougerie-16}
N. Rougerie, \emph{Some contributions to many-body quantum mathematics},
  Habilitation thesis, 2016,
  \href{http://arxiv.org/abs/1607.03833}{\path{arXiv:1607.03833}}.

\bibitem{AroSchWilZee-85}
D.~P. Arovas, R. Schrieffer, F. Wilczek, and A. Zee, \emph{Statistical
  mechanics of anyons}, Nuclear Physics B \textbf{251} (1985), 117 -- 126,
  \href{http://dx.doi.org/10.1016/0550-3213(85)90252-4}{\path{doi}}.

\bibitem{SpoVerZah-91}
M. Sporre, J.~J.~M. Verbaarschot, and I. Zahed, \emph{Numerical solution of the
  three-anyon problem}, Phys. Rev. Lett. \textbf{67} (1991), 1813--1816,
  \href{http://dx.doi.org/10.1103/PhysRevLett.67.1813}{\path{doi}}.

\bibitem{MurLawBraBha-91}
M.~V.~N. Murthy, J. Law, M. Brack, and R.~K. Bhaduri, \emph{Quantum spectrum of
  three anyons in an oscillator potential}, Phys. Rev. Lett. \textbf{67}
  (1991), 1817--1820,
  \href{http://dx.doi.org/10.1103/PhysRevLett.67.1817}{\path{doi}}.

\bibitem{SpoVerZah-92}
M. Sporre, J.~J.~M. Verbaarschot, and I. Zahed, \emph{Four anyons in a harmonic
  well}, Phys. Rev. B \textbf{46} (1992), 5738--5741,
  \href{http://dx.doi.org/10.1103/PhysRevB.46.5738}{\path{doi}}.

\bibitem{SpoVerZah-93}
M. Sporre, J. Verbaarschot, and I. Zahed, \emph{Anyon spectra and the third
  virial coefficient}, Nuclear Physics B \textbf{389} (1993), no.~3, 645--665,
  \href{http://dx.doi.org/10.1016/0550-3213(93)90357-U}{\path{doi}}.

\bibitem{ChenWilWitHal-89}
Y.~H. Chen, F. Wilczek, E. Witten, and B.~I. Halperin, \emph{On anyon
  superconductivity}, Int. J. Mod. Phys. B \textbf{3} (1989), 1001--1067,
  \href{http://dx.doi.org/10.1142/S0217979289000725}{\path{doi}}.

\bibitem{Hosotani-93}
Y. Hosotani, \emph{Neutral and charged anyon fluids}, Int. J. Mod. Phys. B
  \textbf{7} (1993), 2219,
  \href{http://arxiv.org/abs/cond-mat/9302002}{\path{arXiv:cond-mat/9302002}},
  \href{http://dx.doi.org/10.1142/S0217979293002857}{\path{doi}}.

\bibitem{WenZee-90}
X.~G. Wen and A. Zee, \emph{Compressibility and superfluidity in the
  fractional-statistics liquid}, Phys. Rev. B \textbf{41} (1990), 240--253,
  \href{http://dx.doi.org/10.1103/PhysRevB.41.240}{\path{doi}}.

\bibitem{DasOuv-94}
A. {Dasni{\`e}res de Veigy} and S. Ouvry, \emph{Equation of state of an anyon
  gas in a strong magnetic field}, Phys. Rev. Lett. \textbf{72} (1994), 600,
  \href{http://dx.doi.org/10.1103/PhysRevLett.72.600}{\path{doi}}.

\bibitem{Minor-93}
W.~R. Minor, \emph{Ground-state energy of a dilute anyon gas}, Phys. Rev. B
  \textbf{47} (1993), 12716--12721,
  \href{http://dx.doi.org/10.1103/PhysRevB.47.12716}{\path{doi}}.

\bibitem{LunSol-13a}
D. Lundholm and J.~P. Solovej, \emph{{Hardy and Lieb-Thirring inequalities for
  anyons}}, Comm. Math. Phys. \textbf{322} (2013), 883--908,
  \href{http://dx.doi.org/10.1007/s00220-013-1748-4}{\path{doi}}.

\bibitem{LunSol-13b}
D. Lundholm and J.~P. Solovej, \emph{Local exclusion principle for identical
  particles obeying intermediate and fractional statistics}, Phys. Rev. A
  \textbf{88} (2013), 062106,
  \href{http://dx.doi.org/10.1103/PhysRevA.88.062106}{\path{doi}}.

\bibitem{LunSol-14}
D. Lundholm and J.~P. Solovej, \emph{{Local exclusion and Lieb-Thirring
  inequalities for intermediate and fractional statistics}}, Ann. Henri
  Poincar\'e \textbf{15} (2014), 1061--1107,
  \href{http://dx.doi.org/10.1007/s00023-013-0273-5}{\path{doi}}.

\bibitem{LunRou-15}
D. Lundholm and N. Rougerie, \emph{{The average field approximation for almost
  bosonic extended anyons}}, J. Stat. Phys. \textbf{161} (2015), no.~5,
  1236--1267, \href{http://dx.doi.org/10.1007/s10955-015-1382-y}{\path{doi}}.

\bibitem{LarLun-16}
S. Larson and D. Lundholm, \emph{Exclusion bounds for extended anyons}, arXiv
  e-prints, 2016,
  \href{http://arxiv.org/abs/1608.04684}{\path{arXiv:1608.04684}}.

\bibitem{CorLunRou-16}
M. Correggi, D. Lundholm, and N. Rougerie, \emph{Local density approximation
  for the almost-bosonic anyon gas}, arXiv e-prints, 2016,
  \href{http://arxiv.org/abs/1611.00942}{\path{arXiv:1611.00942}}.

\bibitem{Gentile-40}
G. Gentile, \emph{Osservazioni sopra le statistiche intermedie}, Il Nuovo
  Cimento \textbf{17} (1940), no.~10, 493--497,
  \href{http://dx.doi.org/10.1007/BF02960187}{\path{doi}}.

\bibitem{Gentile-42}
G. Gentile, \emph{Le statistiche intermedie e le propriet{\`a} dell'elio
  liquido}, Il Nuovo Cimento \textbf{19} (1942), no.~4, 109--125,
  \href{http://dx.doi.org/10.1007/BF02960192}{\path{doi}}.

\bibitem{ChiSen-92}
R. Chitra and D. Sen, \emph{{Ground state of many anyons in a harmonic
  potential}}, Phys. Rev. B \textbf{46} (1992), 10923--10930,
  \href{http://dx.doi.org/10.1103/PhysRevB.46.10923}{\path{doi}}.

\bibitem{Sen-92}
D. Sen, \emph{Some supersymmetric features in the spectrum of anyons in a
  harmonic potential}, Phys. Rev. D \textbf{46} (1992), 1846--1857,
  \href{http://dx.doi.org/10.1103/PhysRevD.46.1846}{\path{doi}}.

\bibitem{Trugenberger-92b}
C. Trugenberger, \emph{{Ground state and collective excitations of extended
  anyons}}, Phys. Lett. B \textbf{288} (1992), 121--128,
  \href{http://dx.doi.org/10.1016/0370-2693(92)91965-C}{\path{doi}}.

\bibitem{Haldane-91}
F.~D.~M. Haldane, \emph{{``Fractional statistics'' in arbitrary dimensions: A
  generalization of the Pauli principle}}, Phys. Rev. Lett. \textbf{67} (1991),
  937--940, \href{http://dx.doi.org/10.1103/PhysRevLett.67.937}{\path{doi}}.

\bibitem{Isakov-94}
S.~B. Isakov, \emph{Statistical mechanics for a class of quantum statistics},
  Phys. Rev. Lett. \textbf{73} (1994), 2150--2153,
  \href{http://dx.doi.org/10.1103/PhysRevLett.73.2150}{\path{doi}}.

\bibitem{Wu-94}
Y.-S. Wu, \emph{Statistical distribution for generalized ideal gas of
  fractional-statistics particles}, Phys. Rev. Lett. \textbf{73} (1994), 922,
  \href{http://dx.doi.org/10.1103/PhysRevLett.73.922}{\path{doi}}.

\bibitem{HanLeiVie-01}
T.~H. Hansson, J.~M. Leinaas, and S. Viefers, \emph{Exclusion statistics in a
  trapped two-dimensional {B}ose gas}, Phys. Rev. Lett. \textbf{86} (2001),
  2930--2933, \href{http://dx.doi.org/10.1103/PhysRevLett.86.2930}{\path{doi}}.

\bibitem{CanJoh-94}
G.~S. Canright and M.~D. Johnson, \emph{Fractional statistics: alpha to beta},
  J. Phys. A: Math. Gen. \textbf{27} (1994), no.~11, 3579,
  \href{http://dx.doi.org/10.1088/0305-4470/27/11/009}{\path{doi}}.

\bibitem{BorSor-92}
M. Bourdeau and R.~D. Sorkin, \emph{When can identical particles collide?},
  Phys. Rev. D \textbf{45} (1992), 687--696,
  \href{http://dx.doi.org/10.1103/PhysRevD.45.687}{\path{doi}}.

\bibitem{LeeYan-57}
T.~D. Lee and C.~N. Yang, \emph{Many-body problem in quantum mechanics and
  quantum statistical mechanics}, Phys. Rev. \textbf{105} (1957), 1119--1120,
  \href{http://dx.doi.org/10.1103/PhysRev.105.1119}{\path{doi}}.

\bibitem{Dyson-57}
F.~J. Dyson, \emph{Ground-state energy of a hard-sphere gas}, Phys. Rev.
  \textbf{106} (1957), no.~1, 20--26,
  \href{http://dx.doi.org/10.1103/PhysRev.106.20}{\path{doi}}.

\bibitem{Schick-71}
M. Schick, \emph{Two-dimensional system of hard-core bosons}, Phys. Rev. A
  \textbf{3} (1971), 1067--1073,
  \href{http://dx.doi.org/10.1103/PhysRevA.3.1067}{\path{doi}}.

\bibitem{LieYng-98}
E.~H. Lieb and J. Yngvason, \emph{Ground state energy of the low density {B}ose
  gas}, Phys. Rev. Lett. \textbf{80} (1998), no.~12, 2504--2507,
  \href{http://dx.doi.org/10.1103/PhysRevLett.80.2504}{\path{doi}}.

\bibitem{LieYng-01}
E.~H. Lieb and J. Yngvason, \emph{The ground state energy of a dilute
  two-dimensional {B}ose gas}, J. Statist. Phys. \textbf{103} (2001), no.~3-4,
  509--526, Special issue dedicated to the memory of Joaquin M. Luttinger,
  \href{http://dx.doi.org/10.1023/A:1010337215241}{\path{doi}}. \MR{1827922}

\bibitem{LieSeiSolYng-05}
E.~H. Lieb, R. Seiringer, J.~P. Solovej, and J. Yngvason, \emph{The mathematics
  of the {B}ose gas and its condensation}, Oberwolfach {S}eminars,
  Birkh{\"a}user, 2005,
  \href{http://arxiv.org/abs/cond-mat/0610117}{\path{arXiv:cond-mat/0610117}}.

\bibitem{tHooft-88}
G. 't~Hooft, \emph{Non-perturbative 2 particle scattering amplitudes in 2+1
  dimensional quantum gravity}, Comm. Math. Phys. \textbf{117} (1988), no.~4,
  685--700, \href{http://dx.doi.org/10.1007/BF01218392}{\path{doi}}.

\bibitem{DesJac-88}
S. Deser and R. Jackiw, \emph{Classical and quantum scattering on a cone},
  Comm. Math. Phys. \textbf{118} (1988), no.~3, 495--509,
  \href{http://dx.doi.org/10.1007/BF01466729}{\path{doi}}.

\bibitem{KayStu-91}
B.~S. Kay and U.~M. Studer, \emph{Boundary conditions for quantum mechanics on
  cones and fields around cosmic strings}, Comm. Math. Phys. \textbf{139}
  (1991), no.~1, 103--139,
  \href{http://dx.doi.org/10.1007/BF02102731}{\path{doi}}.

\bibitem{LieThi-75}
E.~H. Lieb and W.~E. Thirring, \emph{Bound for the kinetic energy of fermions
  which proves the stability of matter}, Phys. Rev. Lett. \textbf{35} (1975),
  687--689, \href{http://dx.doi.org/10.1103/PhysRevLett.35.687}{\path{doi}}.

\bibitem{RegGoeJol-08}
N. Regnault, M.~O. Goerbig, and T. Jolicoeur, \emph{Bridge between abelian and
  non-abelian fractional quantum {H}all states}, Phys. Rev. Lett. \textbf{101}
  (2008), 066803,
  \href{http://dx.doi.org/10.1103/PhysRevLett.101.066803}{\path{doi}}.

\bibitem{ReaRez-99}
N. Read and E. Rezayi, \emph{{Beyond paired quantum Hall states: Parafermions
  and incompressible states in the first excited Landau level}}, Phys. Rev. B
  \textbf{59} (1999), 8084--8092,
  \href{http://dx.doi.org/10.1103/PhysRevB.59.8084}{\path{doi}}.

\bibitem{CapGeoTod-01}
A. Cappelli, L.~S. Georgiev, and I.~T. Todorov, \emph{Parafermion {H}all states
  from coset projections of abelian conformal theories}, Nucl. Phys. B
  \textbf{599} (2001), no.~3, 499 -- 530,
  \href{http://dx.doi.org/10.1016/S0550-3213(00)00774-4}{\path{doi}}.

\bibitem{MooRea-91}
G. Moore and N. Read, \emph{Nonabelions in the fractional quantum {H}all
  effect}, Nucl. Phys. B \textbf{360} (1991), no.~2, 362 -- 396,
  \href{http://dx.doi.org/10.1016/0550-3213(91)90407-O}{\path{doi}}.

\bibitem{Girvin-etal-90}
S.~M. Girvin, A.~H. MacDonald, M.~P.~A. Fisher, S.-J. Rey, and J.~P. Sethna,
  \emph{Exactly soluble model of fractional statistics}, Phys. Rev. Lett.
  \textbf{65} (1990), 1671--1674,
  \href{http://dx.doi.org/10.1103/PhysRevLett.65.1671}{\path{doi}}.

\bibitem{HanHerSimVie-16}
T.~H. Hansson, M. Hermanns, S.~H. Simon, and S.~F. Viefers, \emph{Quantum
  {H}all hierarchies}, arXiv e-prints, 2016,
  \href{http://arxiv.org/abs/1601.01697}{\path{arXiv:1601.01697}}.

\bibitem{ArdKedSto-05}
E. Ardonne, R. Kedem, and M. Stone, \emph{Filling the {B}ose sea: symmetric
  quantum {H}all edge states and affine characters}, Journal of Physics A:
  Mathematical and General \textbf{38} (2005), no.~3, 617,
  \href{http://dx.doi.org/10.1088/0305-4470/38/3/006}{\path{doi}}.

\bibitem{BerHal-08}
B.~A. Bernevig and F.~D.~M. Haldane, \emph{Model fractional quantum {H}all
  states and {J}ack polynomials}, Phys. Rev. Lett. \textbf{100} (2008), 246802,
  \href{http://dx.doi.org/10.1103/PhysRevLett.100.246802}{\path{doi}}.

\bibitem{Lundholm-13}
D. Lundholm, \emph{Anyon wave functions and probability distributions},
  IH{\'E}S preprint, IHES/P/13/25, 2013,
  \url{http://preprints.ihes.fr/2013/P/P-13-25.pdf}.

\bibitem{Chou-91b}
C. Chou, \emph{Multi-anyon quantum mechanics and fractional statistics}, Phys.
  Lett. A \textbf{155} (1991), no.~4, 245 -- 251,
  \href{http://dx.doi.org/10.1016/0375-9601(91)90477-P}{\path{doi}}.

\bibitem{Chou-91a}
C. Chou, \emph{Multianyon spectra and wave functions}, Phys. Rev. D \textbf{44}
  (1991), 2533--2547,
  \href{http://dx.doi.org/10.1103/PhysRevD.44.2533}{\path{doi}}.

\bibitem{MurLawBhaDat-92}
M.~V.~N. Murthy, J. Law, R.~K. Bhaduri, and G. Date, \emph{On a class of
  noninterpolating solutions of the many-anyon problem}, J. Phys. A: Math. Gen.
  \textbf{25} (1992), no.~23, 6163,
  \href{http://dx.doi.org/10.1088/0305-4470/25/23/013}{\path{doi}}.

\bibitem{ChoLeeLee-92}
M.~Y. Choi, C. Lee, and J. Lee, \emph{Soluble many-body systems with flux-tube
  interactions in an arbitrary external magnetic field}, Phys. Rev. B
  \textbf{46} (1992), 1489--1497,
  \href{http://dx.doi.org/10.1103/PhysRevB.46.1489}{\path{doi}}.

\bibitem{Mashkevich-96}
S. Mashkevich, \emph{{Finite-size anyons and perturbation theory}}, Phys. Rev.
  D \textbf{54} (1996), 6537--6543,
  \href{http://dx.doi.org/10.1103/PhysRevD.54.6537}{\path{doi}}.

\bibitem{Trugenberger-92}
C. Trugenberger, \emph{{The anyon fluid in the Bogoliubov approximation}},
  Phys. Rev. D \textbf{45} (1992), 3807--3817,
  \href{http://dx.doi.org/10.1103/PhysRevD.45.3807}{\path{doi}}.

\bibitem{Ouvry-94}
S. Ouvry, \emph{{$\delta$-perturbative interactions in the Aharonov-Bohm and
  anyons models}}, Phys. Rev. D \textbf{50} (1994), 5296--5299,
  \href{http://dx.doi.org/10.1103/PhysRevD.50.5296}{\path{doi}}.

\end{thebibliography}

\end{document}